\let\ps@IEEEtitlepagestyle\ps@plain
\pgfplotsset{compat=newest}
  \pgfplotsset{compat=newest}
\DeclareMathOperator*{\argmin}{arg\,min}
\DeclareMathOperator*{\argmax}{arg\,max}
  \pgfplotsset{compat=newest}
\newtheorem{theorem}{Theorem}
\newtheorem{corollary}{Corollary}
\newtheorem{definition}{Definition}
\newtheorem{remark}{Remark}
\newtheorem{example}{Example}
\newcommand{\bbE}{\mathbb{E}}
\newcommand{\cX}{{\cal X}}
\newcommand{\removed}[1]{}
\patchcmd{\@IEEEeqnarray}{\relax}{\relax\intertext@}{}{}
\pgfplotsset{plot coordinates/math parser=false}
\pgfplotsset{compat=newest}
\def\BibTeX{{\rm B\kern-.05em{\sc i\kern-.025em b}\kern-.08em
    T\kern-.1667em\lower.7ex\hbox{E}\kern-.125emX}}
\begin{document}

\title{Log-Likelihood Loss for Semantic Compression\\

}

\author{
\IEEEauthorblockN{
{{Anuj Kumar Yadav\IEEEauthorrefmark{1}\IEEEauthorrefmark{2},
Dan Song\IEEEauthorrefmark{2},
Yanina Shkel\IEEEauthorrefmark{1},
Ayfer Özgür\IEEEauthorrefmark{2}
}}}
\vspace*{4mm}\\
\IEEEauthorblockA{\IEEEauthorrefmark{1}School of Computer \& Communication Sciences, EPFL, Switzerland}\\
\IEEEauthorblockA{\IEEEauthorrefmark{2}Department of Electrical Engineering, Stanford University, USA\\
Emails: \{anuj.yadav, yanina.shkel\}@epfl.ch\\ \hspace{13mm}\{songdan, aozgur\}@stanford.edu}
}

\maketitle
\pagestyle{plain}
\begin{abstract}
We study lossy source coding under a distortion measure defined by the negative log-likelihood induced by a prescribed conditional distribution $P_{X|U}$. This \emph{log-likelihood distortion} models compression settings in which the reconstruction is a semantic representation from which the source can be probabilistically generated, rather than a pointwise approximation. We formulate the corresponding rate–distortion problem and characterize fundamental properties of the resulting rate–distortion function, including its connections to lossy compression under log-loss, classical rate–distortion problems with arbitrary distortion measures, and rate–distortion with perfect perception.
\end{abstract}
\smallbreak

\section{Introduction}
Given a source \( X \sim P_X \) taking values in an alphabet \( \mathcal{X} \), and a  conditional distribution \( P_{X|U} \) such that \( P_{X|U}(\cdot | u) \) is a valid probability distribution on \(\mathcal{X}\)  for every \( u \in \mathcal{U} \), we study the rate--distortion trade-off for lossy compression under the distortion measure
\( d_{\ell\mkern-2mu\ell} : \mathcal{X} \times \mathcal{U} \to [0,\infty] \) defined as
\begin{equation}\label{eq:dist}
d_{\ell\mkern-2mu\ell}(x,y) \triangleq \log \frac{1}{P_{X|U}(x|y)} .
\end{equation}
We refer to $d_{\ell\mkern-2mu\ell}(x,y)$
 as the \emph{log-likelihood loss} (see Fig.~\ref{fig:framework0}).\footnote{Our primary focus in this paper is the rate--distortion trade-off under the proposed log-likelihood loss. While the rate--distortion function characterizes the asymptotic trade-off for lossy compression of i.i.d.\ sources, it is also relevant in the one-shot setting, as shown in \cite{li_strong_2018}. We adopt scalar notation in our figures for simplicity and because it is often more appropriate for semantic compression settings.}
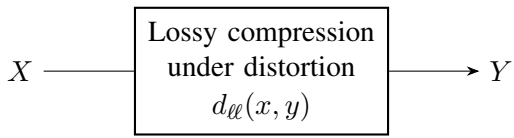
\begin{figure}[H]
  \begin{center}
   
  \begin{tikzpicture}[>=Stealth, baseline,
scale= 1,
transform shape
]

  \node (X) {$X$};

  \node[
    draw=black,
    thick,
    rectangle,
    inner sep=5pt,
    align=center,
    right=12mm of X
  ] (B) {Lossy compression\\under distortion\\[1pt]
         $d_{\ell\mkern-2mu\ell}(x,y)$};

  \node[right=12mm of B] (Y) {$Y$};

  \draw[-] (X) -- (B.west);
  \draw[->] (B.east) -- (Y);

\end{tikzpicture}

       \vspace{2mm}
    \caption{Log-likelihood loss based lossy compression.}
    \label{fig:framework0}
  \end{center}
\end{figure}

We propose this distortion measure to model modern compression settings in which lossy compression serves a dual purpose: producing a compact representation while preserving task-relevant features or semantic information about the source. For instance, \( X \) may represent an image and \( U \) a latent semantic representation describing its content, with \( P_{X|U} \) modeling the probabilistic relationship between the semantics and the image, for example as induced by a trained generative or reconstruction model. Similarly, \( X \) may correspond to a text document and \( U \) to its semantic summary. In such settings, compression under the log-likelihood distortion in \eqref{eq:dist} seeks a representation \( Y \) of limited rate for which the original source realization \( X \) is maximally likely given \( Y \) through the probabilistic mapping \( P_{X|U} \). Equivalently, the distortion quantifies the negative log-likelihood of reconstructing the source from its compressed representation, aligning compression with probabilistic reconstruction fidelity rather than traditional signal-level metrics, which quantify fidelity through a pointwise discrepancy between \( X \) and its reconstruction \( Y \). As such, the log-likelihood loss naturally models scenarios in which the reconstruction \( Y \) does not represent a pointwise approximation of \( X \), but rather an abstract description from which \( X \) can be probabilistically generated.
\begin{figure}[H]
\centering

\begin{subfigure}{\linewidth}
\centering
\usetikzlibrary{positioning}

\centering
\begin{tikzpicture}[
  scale=1,
  transform shape,
  node distance=4.5cm,
  >={Stealth[fill=black,scale=1]},
  midblock/.style={
    draw, rectangle, fill=white, inner sep=3pt, align=center,
    text width=2.8cm, 
    minimum height=1.2cm
  }
]

\node[font=\Large, inner sep=0pt] (Umain) {$U$};
\node[font=\Large, right=of Umain, inner sep=0pt] (Xmain) {$X$};
\node[font=\Large, right=of Xmain, inner sep=0pt] (Ymain) {$Y$};

\draw[->] (Umain) -- (Xmain)
  node[midway, midblock]
  {Noisy\\channel\\$P_{X|U}$};

\draw[->] (Xmain) -- (Ymain)
  node[midway, midblock]
  {Lossy compression\\under distortion\\[1pt]
   $d_{\ell\mkern-2mu\ell}(x,y)$};

\end{tikzpicture}
\vspace{2mm}
\caption{Compression based denoising}
\end{subfigure}
\vspace*{0.5cm}

\begin{subfigure}{\linewidth}
\centering
\begin{tikzpicture}[
  scale=1,
  transform shape,
  node distance=4cm,
    >={Stealth[fill=black,scale=1]}
]

\node[font=\Large, inner sep=0pt] (Umain) {$X$};
\node[font=\Large, right=of Umain, inner sep=0pt] (Xmain) {$Y$};
\node[font=\Large, right=of Xmain, inner sep=0pt] (Ymain) {$\hat{X}$};


\draw[->] (Umain) -- (Xmain)
  node[midway, draw, rectangle, inner sep=5pt, fill=white, align=center]
  {Lossy encoder\\under distortion\\[1pt]
         $d_{\ell\mkern-2mu\ell}(x,y)$};

\draw[->] (Xmain) -- (Ymain)
  node[midway, draw, rectangle, inner sep=5pt, fill=white, align=center]
  {Generative Model\\$P_{X\mid U}$};

\end{tikzpicture}
\label{fig:denoising-setup}

\caption{Compression with generative reconstruction}
\end{subfigure}
\caption{}
    \label{fig:framework1}
\end{figure}

Beyond semantic compression, the proposed distortion measure can be used to model several related but distinct application settings. In Fig.~\ref{fig:framework1}-(a), \( U \) represents an underlying source signal and \( X \) its noisy observation obtained through a channel \( P_{X|U} \). The encoder observes \( X \) and compresses it under the distortion measure in \eqref{eq:dist}. In this setting, the log-likelihood loss promotes denoising through compression. This effect was observed in \cite{Danpaper}, where it was shown that when compression is performed at the average distortion level
\( \mathbb{E}[d_{\ell\mkern-2mu\ell}(X,Y)] = H(X|U) \), the resulting representation serves as a universal denoising of \( X \). A closely related idea appeared earlier in~\cite{Daria}, where the distortion measure \eqref{eq:dist} was used as a cost function for entropic optimal transport in the context of generative modeling from privatized data. 

Finally, Fig.~\ref{fig:framework1}-(b) illustrates a complementary application in which the decoder is fixed in advance to a given probabilistic reconstruction model \( P_{X|U} \). In this setting, the source is compressed with the knowledge that reconstruction will be performed probabilistically according to \( P_{X|U} \), for example via a generative or AI-based model. The distortion measure \eqref{eq:dist} ensures that the compression strategy is matched to the decoder in the sense of maximizing the likelihood of reproducing the original source.

While these scenarios represent distinct applications of the proposed log-likelihood distortion measure—ranging from semantic compression to denoising and fixed-decoder reconstruction—they collectively highlight its relevance as a natural distortion measure for modern compression problems involving semantic representations.

\subsection{Related work} Semantic compression has gained traction in recent years and several models have been proposed for semantic-preserving lossy compression. In particular, foundation-model based semantic compression techniques have been studied in~\cite{f1,f2,f3}. An approach to semantic compression based on information lattice learning was proposed in~\cite{lav}.  In~\cite{sac} authors study an arithmetic coding based method for semantic lossless compression.~\cite{srdt} studied a rate-distortion framework for semantic compression via divergence measure inspired by information-bottleneck constraint i.e., based on distribution of semantics $(U)$ conditioned on the observation $(X)$ and the reconstruction $(Y)$ i.e., $P_{U|X}$ and $P_{U|Y}$ respectively, along with an observation distortion measure. Other frameworks for semantic compression based on rate-distortion have been proposed under different metrics in~\cite{srd1,srd2,srdp1}. Our approach significantly deviates from these earlier approaches as we capture semantics via a simple and intuitive novel distortion measure inspired by log-loss, under which the goal is to compress $X$ into $Y$ such that $X$ remains highly likely under $P_{X|U}(\cdot|Y)$.

\subsection{Contributions and Organization} In this paper, we focus on studying the rate-distortion function under the proposed log-likelihood loss. In section~\ref{sec:properties}, we present several properties of the rate-distortion function  and explore its relations with the standard log-loss distortion. Though, the log-likelihood loss in~\eqref{eq:dist} looks deceptively restricted, in section~\ref{sec:classical-rd} we show that it generalizes several commonly studied classical rate-distortion frameworks indicating its applicability across various rate-distortion scenarios, followed by an illustrative example. In section~\ref{sec:rdp}, we show that our framework provides an achievable scheme to attain rate-distortion with perfect perception for a special class of rate-distortion problems. The detailed proofs are deferred to the Appendix.

\section{Notations and Background}

The PMF of a discrete random variable (PDF for continuous random variables) $X$ is denoted using a upper case letter, say $P_X$, while the probability of an event is denoted using the bold-face letter $\mathbb{P}$. Given a random variable $X$, its support (and sets in general) is denoted by $\cX$, while a realization is denoted by lower case letter, for example, $x\in \cX$. For a joint distribution $P_{XYZ}$, the $[P_{XYZ}]_{XY}$ denotes the joint distribution of $(X,Y)$. We use $\Delta(\mathcal{X})$ to denote the simplex on $\mathcal{X}$. We denote the expectation of the random variable $X$ as bold-face $\mathbb{E}[X]$. The set of all real numbers and non-negative real numbers are denoted by $\mathbb{R}$ and $\mathbb{R}_{+}$, respectively.
We use $H(X)$ denote the Shannon entropy (differential entropy for continuous random variables) of a random variable $X$. All logarithms are to the base $e$, unless stated otherwise.
\begin{definition}[Rate-Distortion Function (RDF)] Given an information source $X \sim P_X$ taking values in $\mathcal{X}$. Let $Y \in \mathcal {Y}$ be the lossy reconstruction of $X$ under the distortion measure
$d:\mathcal X\times{\mathcal{Y}}\rightarrow [0,\infty]$. Then, the rate-distortion function (RDF) for $(X,d)$ is given by
\begin{align}
R(D) =\min_{W_{Y|X}:\ \mathbb{E}_{XY}[d(X,Y)]\leq D}\ I(X;Y)
\end{align}
for any $D \geq 0$.
\end{definition}
\begin{definition}[Rate-Distortion Function under Log-Likelihood Loss (RLLDF)]\label{def:srd}
Consider the lossy compression of $X$ into a reconstruction $Y \in \mathcal{U}$ under the log-likelihood distortion measure $d_{\ell\mkern-2mu\ell}: \mathcal{X} \text{ } \times \text{ } \mathcal{U} \rightarrow [0,\infty]$ such that $d_{\ell\mkern-2mu\ell}(x,y):=-\log P_{X|U}(x|y)$, where $P_{X|U}(x|u)$ for $x\in \mathcal{X}$ and $u\in\mathcal{U}$ is a given conditional distribution. The rate-distortion function under log-likelihood loss for $(X, P_{X|U})$ is defined as follows
\begin{align}
    R_{\ell\mkern-2mu\ell}(D)=  \min_{W_{Y|X}:\mathbb{E}_{XY}[d_{\ell\mkern-2mu\ell}(X,Y)] \leq D}I(X;Y)
\end{align}
\end{definition}

\begin{remark}
 $R_{\ell\mkern-2mu\ell}(D)$ can also be expressed as
\begin{align}
    R_{\ell\mkern-2mu\ell}(D)&= \min_{\substack{W_{Y|X}:\\H(X|Y)+\mathbb{E}_{Y}[D_\mathrm{KL}(W_{X|Y}(\cdot|Y)||P_{X|U}(\cdot|Y))] \leq D}}I(X;Y) \label{eq:comp:srd}\\
    &=\min_{\substack{W_{Y|X}:\\\mathbb{E}_{Y}[\mathrm{CE}(W_{X|Y}(\cdot|Y)||P_{X|U}(\cdot|Y))] \leq D}}I(X;Y)
\end{align}
where $\mathrm{CE}(\cdot\|\cdot)$ denotes the cross-entropy function.
\end{remark}

\section{Properties OF \texorpdfstring{$R_{\ell\mkern-2mu\ell}(D)$}{R[ll](D)}}\label{sec:properties}
In the following, we state some properties of the  RDF under log-likelihood loss  and it's connections with other rate-distortion problems studied in the literature.
\begin{theorem}\label{thm:prop1}
For given $(X, P_{X|U})$, $R_{\ell\mkern-2mu\ell}(D)$ is defined only for the distortion values $D\in [D_{\min},D_{\max}]$ where
\begin{align}
    D_{\min} &\triangleq \mathbb{E}_X\left[\min_u \log\frac{1}{P_{X|U}(X|u)}\right]\\
    D_{\max} &\triangleq \min_u \mathbb{E}_X\left[\log\frac{1}{P_{X|U}(X|u)}\right]
\end{align}
Moreover, the following facts follow.
\smallbreak
\begin{itemize}
  \item[(i)] At distortion $D=D_{\min}$,   we have
\begin{align}\label{eq:Dmin}
    R_{\ell\mkern-2mu\ell}(D_{\min}) = \min_{Q \in \Delta(\mathcal{U})} \mathbb{E}_X\left[-\log \sum_{u \in T(X)}Q_U(u) \right]
\end{align}
where $T(x):= \{u \in \mathcal{U}: u= \arg\max_{u'}P_{X|U}(x|u')\}$. The optimal reconstruction is a {randomized maximum-likelihood (ML) decoder}: for each $x$, the decoder outputs a random $y \in T(x)$, where the optimal tie-randomization strategy  over $T(x)$ is given by the minimizer of \eqref{eq:Dmin}.\\ 
\item[(ii)] At distortion $D= D_{\max}$, we have $R_{\ell\mkern-2mu\ell}(D_{\max})=0$.\\
\item[(iii)] Assume there exists $(U,X)\sim P_{U,X}$ which is consistent with $(X, P_{X|U})$. At distortion $D^{*}=H(X|U)$, $R_{\ell\mkern-2mu\ell}(D)$ admits a closed-form expression i.e.,
\begin{align*}
    R_{\ell\mkern-2mu\ell}(D^{*}) = I(U;X)
\end{align*}
We refer to the distortion level $D^{*}$ as a special operating point for $(X, P_{X|U})$. At distortion $D=D^*$, the reconstruction $Y$ of the optimal compressor is a sample from the posterior $P_{U|X}$, which further implies $Y \sim P_U$.
\end{itemize}
\end{theorem}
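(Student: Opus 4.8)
The plan is to establish the three items in order, since each builds intuition for the next. First, I would pin down the domain $[D_{\min}, D_{\max}]$. For any feasible $W_{Y|X}$, linearity of expectation and the pointwise bound $\log\frac{1}{P_{X|U}(x|y)} \ge \min_u \log\frac{1}{P_{X|U}(x|u)}$ give $\mathbb{E}_{XY}[d_{\ell\mkern-2mu\ell}(X,Y)] \ge D_{\min}$, so no distortion below $D_{\min}$ is achievable; on the other side, taking $W_{Y|X}$ to be the deterministic map $x \mapsto u^\star$ for the single $u^\star$ attaining $\min_u \mathbb{E}_X[\log\frac{1}{P_{X|U}(X|u)}]$ shows $D_{\max}$ is achievable with $I(X;Y)=0$, which simultaneously proves item (ii). That $D_{\max}$ is the \emph{largest} useful distortion follows because $R_{\ell\mkern-2mu\ell}$ is non-increasing and already hits $0$ there.

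For item (i), the lower bound on $D_{\min}$ is met with equality only when $W_{Y|X}$ places all its mass, for each $x$, on the set $T(x)$ of maximizers; conversely any such $W$ is feasible at $D = D_{\min}$. So the optimization over feasible $W$ collapses to: minimize $I(X;Y)$ over channels supported on $\bigcup_x T(x)$ with $W_{Y|X}(\cdot|x)$ supported in $T(x)$. The key reduction is that for such channels $H(Y|X)$ plays no role in lowering $I(X;Y)$ beyond what the constraint forces — more precisely, $I(X;Y) = H(Y) - H(Y|X)$, and one shows the minimizer can be taken so that all randomization is ``shared'' in the sense of a single output distribution $Q_U \in \Delta(\mathcal{U})$, with each $x$ routed to $Q_U$ restricted and renormalized to $T(x)$. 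Plugging that structure in, $I(X;Y)$ telescopes to $\mathbb{E}_X[-\log \sum_{u \in T(X)} Q_U(u)]$; optimizing over $Q_U$ gives \eqref{eq:Dmin}. The cleanest way to make the ``shared randomization'' step rigorous is probably a convexity/data-processing argument: given any optimal $W$, define $Q_U = \sum_x P_X(x) W_{Y|X}(\cdot|x)$ and check that the induced ``posterior-matched'' channel $x \mapsto Q_U(\cdot)\mathbbm{1}[\cdot \in T(x)]/Q_U(T(x))$ is feasible and does not increase mutual information; I expect verifying this does not increase $I(X;Y)$ to be the main obstacle in the whole theorem.

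For item (iii), the hypothesis gives a joint law $P_{U,X}$ whose $X$-marginal is $P_X$ and whose conditional is the prescribed $P_{X|U}$. Take $W_{Y|X} = P_{U|X}$, the posterior under this joint law. Then the reconstruction $Y$ has marginal $P_U$, the pair $(X,Y)$ is distributed as $(X,U)$, and $\mathbb{E}_{XY}[d_{\ell\mkern-2mu\ell}(X,Y)] = \mathbb{E}_{X,U}[-\log P_{X|U}(X|U)] = H(X|U) = D^\star$, so this channel is feasible at $D^\star$ and achieves $I(X;Y) = I(X;U)$; hence $R_{\ell\mkern-2mu\ell}(D^\star) \le I(X;U)$. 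For the matching lower bound I would use the identity from the Remark, $R_{\ell\mkern-2mu\ell}(D) = \min\{ I(X;Y) : H(X|Y) + \mathbb{E}_Y[D_{\mathrm{KL}}(W_{X|Y}\|P_{X|U})] \le D\}$: for any feasible $W$ at $D = H(X|U)$, nonnegativity of the KL term gives $H(X|Y) \le H(X|U)$, i.e.\ $I(X;Y) = H(X) - H(X|Y) \ge H(X) - H(X|U) = I(X;U)$. Combining the two bounds yields equality, and the achievability construction exhibits the claimed optimal compressor; the ``$Y \sim P_U$'' consequence is immediate from the construction.
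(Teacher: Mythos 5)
Your plan is structurally sound, and for items (ii) and (iii) it matches the paper's argument: achievability of $D_{\max}$ at zero rate by a constant map, and the two-sided bound at $D^{*}=H(X|U)$ via the test channel $W_{Y|X}=P_{U|X}$ for achievability and nonnegativity of the KL term in the cross-entropy decomposition for the converse.

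For item (i) you correctly reduce the feasible set to channels with $\mathrm{supp}(W_{Y|X}(\cdot|x))\subseteq T(x)$, but the step you flag as the ``main obstacle'' --- verifying that the posterior-matched channel does not increase $I(X;Y)$ --- is exactly where the paper's proof avoids any such verification by inserting the variational (golden) formula
\begin{align*}
I(X;Y)=\min_{Q_Y\in\Delta(\mathcal{U})}\ \mathbb{E}_X\!\left[D_{\mathrm{KL}}\!\left(W_{Y|X}(\cdot|X)\,\big\|\,Q_Y\right)\right]
\end{align*}
directly into the objective and then swapping the two minimizations (always valid for a double minimum, since $Q_Y$ ranges over all of $\Delta(\mathcal{U})$ rather than being constrained to be the induced marginal). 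For fixed $Q_Y$ the inner problem becomes, for each $x$, the I-projection of $Q_Y$ onto the subsimplex supported on $T(x)$, solved in closed form by $W_{Y|X}(y|x)=Q_Y(y)\mathbbm{1}[y\in T(x)]/Q_Y(T(x))$, with minimal value $-\log Q_Y(T(x))$; taking the expectation over $X$ and the outer min over $Q_Y$ gives \eqref{eq:Dmin}. Your modify-and-compare route does in fact close via the very same formula ($I_W(X;Y)=\mathbb{E}_X[D_{\mathrm{KL}}(W\|Q_Y)]\ge \mathbb{E}_X[D_{\mathrm{KL}}(W'\|Q_Y)]\ge \min_{Q'}\mathbb{E}_X[D_{\mathrm{KL}}(W'\|Q')]=I_{W'}(X;Y)$), so the obstacle you anticipated is not really one, but swapping the minimizations removes the need to introduce a modified channel at all.

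One small additional remark on $D_{\max}$: to pin down that $D_{\max}$ is the \emph{smallest} distortion at which the rate vanishes, the paper computes the distortion-rate value $D(0)=\min_{W:\,I(X;Y)=0}\mathbb{E}[d_{\ell\mkern-2mu\ell}(X,Y)]$ and shows $D(0)=D_{\max}$; your argument (``non-increasing and already hits zero there'') establishes $R_{\ell\mkern-2mu\ell}(D_{\max})=0$ but does not by itself rule out $R_{\ell\mkern-2mu\ell}$ reaching zero at some smaller $D$.
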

\begin{proof}
    The proof is deferred to the appendix~\ref{app:prop}.
\end{proof}

\subsection{Connection to Log-loss Distortion}
Rate distortion under log-loss has been studied widely in the literature~\cite{courtade,pmlr_shkel,shkel_it}. For a given source $X \sim P_X$, the decoder outputs a predictive distribution $W \in \Delta(\mathcal{X})$ under the log-loss distortion $d(x,W) := -\log W(x)$. At a given distortion level $D$, the log-loss RDF is given by
\begin{align}
   R_{\ell}(D) &=\min_{W_{Y|X}: H(X|Y) \leq D} I(X;Y)\label{eq:comp:llrd}\\
   &= H(X) -D 
\end{align}
where $D \in [0,H(X)]$.

The log-likelihood loss can also be interpreted as allowing the decoder to output a predictive distribution, however in this case the predictive distribution is
restricted to a strict subset of the simplex, namely the family
$\{P_{X|U}(\cdot|u)\}_{u\in\mathcal{U}} \subset \Delta(\mathcal{X})$. Concretely, if the decoder outputs a symbol $y \in \mathcal{U}$, this induces the predictive distribution $W_y := P_{X|U}(\cdot|y)$, and the log-likelihood loss becomes
\begin{align}
d_{\ell\mkern-2mu\ell}(x,y) = -\log P_{X|U}(x|y) = d(x,W_y),
\end{align}

Hence the log-likelihood loss can be viewed as a restricted version of the log-loss framework where the decoder is allowed to output a certain class of predictive distributions. As such, the trade-off under log-loss, not surprisingly, serves as a lower bound for the trade-off under log-likelihood loss as stated in the next theorem. From a practical perspective, this can model  settings where the decoder is indeed restricted to use a given predictive model as illustrated in Fig.~2-(b).
\begin{figure}
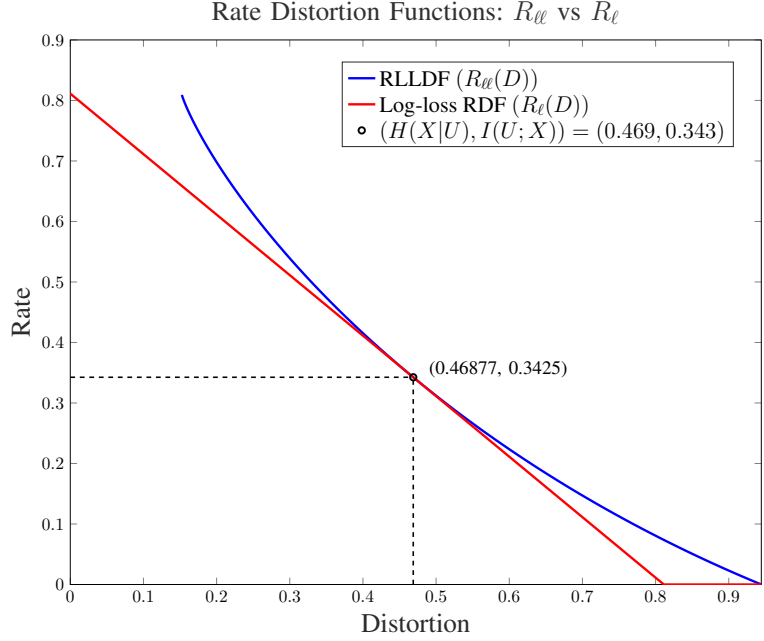

  \centering

%
  \caption{$(X,P_{X|U})$:  $X \sim \mathrm{Ber}(0.25)$ and $P_{X|U} \sim \mathrm{BSC}(0.1)$, the figure plots $R_{\ell}(D)$ for $D \in [0,H(X)=0.8113]$ and $R_{\ell\mkern-2mu\ell}(D)$ for $D \in [D_{\min} =0.152, D_{\max}=0.945]$. At $D^{*}= 0.469$, both RDFs coincide. $D^{*}$ is the special operating point for $(X,P_{X|U})$ with $U \sim \mathrm{Ber}(0.1875)$.}\label{fig:example}
\end{figure}
The relationship between the log-likelihood loss and the standard log-loss can be further understood by comparing the corresponding distortion constraints in \eqref{eq:comp:srd} and \eqref{eq:comp:llrd}. Under the log-loss distortion in \eqref{eq:comp:llrd}, the distortion constraint bounds only the conditional entropy \( H(X|Y) \), thereby enforcing that the representation \( Y \) is predictive of \( X \). In contrast, with the log-likelihood distortion, we bound \( H(X|Y) \) together with an additional KL divergence term that measures the discrepancy between the predictive distribution induced by \( Y \) and the prescribed probabilistic model \( P_{X|U} \).

In addition to modeling a constrained decoder, this additional term can be interpreted as guiding the compression toward a desired notion of semantics. With log-loss alone, there is no restriction on the form of the predictive distribution, and therefore the distortion does not distinguish, for example in the case of images, between a coarse quantization of the pixels and a semantic representation, provided both are equally predictive of the source image. In contrast, when using the log-likelihood distortion, a prescribed semantic structure encoded through \( P_{X|U} \) allows the compression to be steered toward representations that align with that notion of semantics.
\begin{figure}
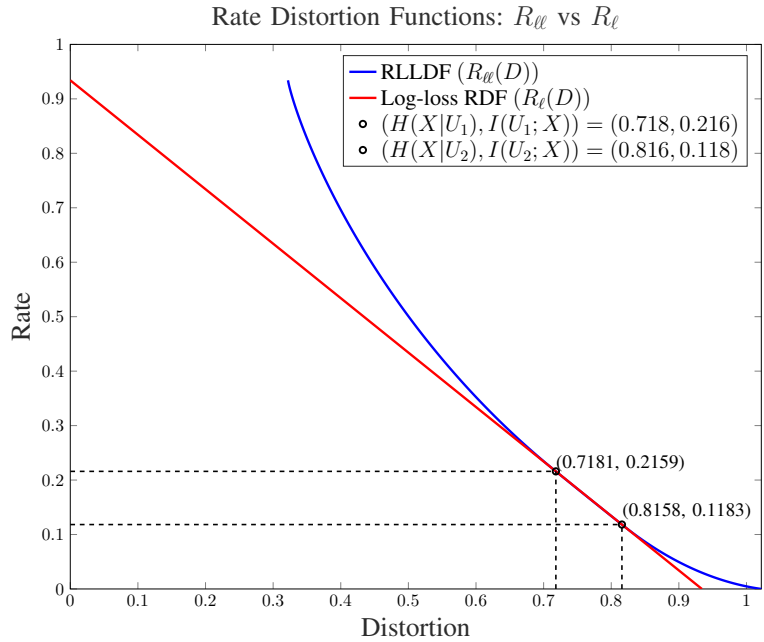

  \centering
%
%
\definecolor{mycolor1}{rgb}{0.00000,0.44700,0.74100}%
\definecolor{mycolor2}{rgb}{0.85000,0.32500,0.09800}%
%
  \caption{$(X,P_{X|U})$: $X \sim \mathrm{Ber}(0.35)$ and $P_{X|U} = [
0.8,  0.4,  0.2;
0.2, 0.6, 0.8]
$, the figure plots $R_{\ell}(D)$ for $D \in [0,0.934]$ and $R_{\ell\mkern-2mu\ell}(D)$ for $D \in [0.322, 1.022]$. Both RDFs coincide for $D^{*} \in [0.718,0.816]$. Every $D^{*}$ in this interval is a special operating point with a corresponding $(U_i,X)$ consistent with $(X,P_{X|U})$.Since RLLDF $R_{\ell\mkern-2mu\ell}(D)$ does not exhibits a closed-form expression for this $(X,P_{X|U})$, the $R_{\ell\mkern-2mu\ell}(D)$ is plotted using Blahut-Arimoto algorithm~\cite{blahut,arimoto}.}\label{fig:example2}
\end{figure}
\begin{theorem}\label{thm:logloss}
Let $R_{\ell}(D)$ be the RDF under log-loss for $X$, and let $R_{\ell\mkern-2mu\ell}(D)$ be the RDF under log-likelihood loss for $(X, P_{X|U})$. Then, for all feasible $D \in [D_{\min},D_{\max}]$,
\begin{align}
R_{\ell\mkern-2mu\ell}(D)\ge R_{{\ell}}(D)
\end{align}
where the equality holds for all special operating points as defined in Theorem~\ref{thm:prop1}-(iii), i.e. whenever there exists  $(U,X)\sim P_{U,X}$ which is consistent with $(X, P_{X|U})$ and $D=H(X|U)$.
\end{theorem}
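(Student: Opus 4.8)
The plan is to prove the inequality by a simple feasible-set containment argument based on the reformulation \eqref{eq:comp:srd} of $R_{\ell\mkern-2mu\ell}(D)$, and then to deduce tightness at every special operating point directly from Theorem~\ref{thm:prop1}-(iii) together with the closed-form log-loss RDF $R_{\ell}(D)=H(X)-D$.

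For the inequality, I would fix any feasible $D\in[D_{\min},D_{\max}]$ and take an arbitrary $W_{Y|X}$ that is feasible for the minimization defining $R_{\ell\mkern-2mu\ell}(D)$. By \eqref{eq:comp:srd} this means $H(X|Y)+\mathbb{E}_{Y}[D_{\mathrm{KL}}(W_{X|Y}(\cdot|Y)\|P_{X|U}(\cdot|Y))]\le D$, and since the KL term is nonnegative this already forces $H(X|Y)\le D$; hence $W_{Y|X}$ is feasible for the log-loss problem \eqref{eq:comp:llrd} at the same distortion level. Thus the feasible set in $R_{\ell\mkern-2mu\ell}(D)$ is contained in the feasible set in $R_{\ell}(D)$, and minimizing the common objective $I(X;Y)$ over the smaller set gives $R_{\ell\mkern-2mu\ell}(D)\ge R_{\ell}(D)$. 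One bookkeeping remark I would include: since $D_{\max}=H(X)+\min_u D_{\mathrm{KL}}(P_X\|P_{X|U}(\cdot|u))\ge H(X)$, the interval $[D_{\min},D_{\max}]$ can extend past $H(X)$, so the comparison should be read with $R_{\ell}(D)=(H(X)-D)^{+}$: for $D\ge H(X)$ the bound $R_{\ell\mkern-2mu\ell}(D)\ge 0=R_{\ell}(D)$ is trivial, and for $D_{\min}\le D<H(X)$ the containment argument applies verbatim with $R_{\ell}(D)=H(X)-D$.

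For the equality part, I would specialize to $D=D^{*}=H(X|U)$ where $(U,X)\sim P_{U,X}$ is consistent with $(X,P_{X|U})$. Theorem~\ref{thm:prop1}-(iii) gives $R_{\ell\mkern-2mu\ell}(D^{*})=I(U;X)$ (and in particular certifies that $D^{*}$ is feasible, which is also seen directly from $D_{\min}\le H(X|U)\le H(X)\le D_{\max}$), while the closed form of the log-loss RDF gives $R_{\ell}(D^{*})=H(X)-D^{*}=H(X)-H(X|U)=I(U;X)$. Hence $R_{\ell\mkern-2mu\ell}(D^{*})=R_{\ell}(D^{*})$, and together with the inequality this yields the asserted equality at every special operating point.

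I do not expect a genuine obstacle here: the inequality is essentially a one-line consequence of the reformulation \eqref{eq:comp:srd} plus nonnegativity of KL divergence, and the tightness claim follows immediately from the already-established Theorem~\ref{thm:prop1}-(iii). The only place that demands any care is the domain bookkeeping noted above, since the feasible distortion range of $R_{\ell\mkern-2mu\ell}$ need not be contained in $[0,H(X)]$, so the statement of the comparison must be interpreted with $R_{\ell}$ extended as $(H(X)-D)^{+}$.
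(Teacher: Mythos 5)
Your proposal is correct and follows essentially the same route as the paper's proof: both argue via the decomposition $\mathbb{E}_{XY}[d_{\ell\mkern-2mu\ell}(X,Y)]=H(X|Y)+\mathbb{E}_Y[D_{\mathrm{KL}}(W_{X|Y}\|P_{X|U})]$ and nonnegativity of the KL term to obtain feasible-set containment and hence the inequality, and both obtain equality at $D^{*}=H(X|U)$ by comparing Theorem~\ref{thm:prop1}-(iii) with $R_{\ell}(D^{*})=H(X)-H(X|U)=I(U;X)$. Your explicit bookkeeping remark about extending $R_{\ell}$ as $(H(X)-D)^{+}$ for $D>H(X)$ is a useful clarification that the paper only acknowledges in passing (by noting $D_{\max}\ge H(X)$ and stating the two-case form of $R_{\ell}$), but it does not change the substance of the argument.
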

\begin{proof}
    The proof is deferred to the appendix~\ref{app:logloss}.
\end{proof}
\begin{corollary}
    Assume there exists $(U_1,X)\sim  P_{U_1,X}$ and $(U_2,X)\sim  P_{U_2,X}$ s.t. both $ P_{U_1,X}$ and $ P_{U_2,X}$ are consistent with $(X, P_{X|U})$. 
    Let $D_1=H(X|U_1)$ and $D_2=H(X|U_2)$, then for all $D_1\leq D\leq D_2$,
    $$
    R_{\ell\mkern-2mu\ell}(D)= H(X)-D.
    $$
\end{corollary}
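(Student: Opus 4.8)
The plan is to combine three facts that are already in place: the lower bound $R_{\ell\mkern-2mu\ell}(D)\ge R_{\ell}(D)=H(X)-D$ from Theorem~\ref{thm:logloss}, the exact values of $R_{\ell\mkern-2mu\ell}$ at special operating points from Theorem~\ref{thm:prop1}-(iii), and convexity of $R_{\ell\mkern-2mu\ell}(\cdot)$. Assume without loss of generality $D_1\le D_2$ (otherwise the claim is over an empty set of $D$'s, or the indices may be swapped). First I would note that, by Theorem~\ref{thm:prop1}-(iii), since $P_{U_1,X}$ and $P_{U_2,X}$ are consistent with $(X,P_{X|U})$, both $D_1=H(X|U_1)$ and $D_2=H(X|U_2)$ are feasible, i.e.\ lie in $[D_{\min},D_{\max}]$, and
\[
R_{\ell\mkern-2mu\ell}(D_1)=I(U_1;X)=H(X)-D_1,\qquad R_{\ell\mkern-2mu\ell}(D_2)=I(U_2;X)=H(X)-D_2 .
\]
Because $[D_{\min},D_{\max}]$ is an interval containing $D_1$ and $D_2$, the whole segment $[D_1,D_2]$ is feasible.

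Next I would record the convexity of $R_{\ell\mkern-2mu\ell}$ on its feasible interval, via the standard rate--distortion argument specialized to our setting: for feasible distortion levels $D,D'$ with (near-)optimal channels $W_{Y|X},W'_{Y|X}$ from $X$ into $\mathcal{U}$, the mixed channel $\lambda W_{Y|X}+(1-\lambda)W'_{Y|X}$ is again a channel into $\mathcal{U}$, has distortion at most $\lambda D+(1-\lambda)D'$ since $\mathbb{E}_{XY}[d_{\ell\mkern-2mu\ell}(X,Y)]$ is linear in the channel, and has mutual information at most $\lambda I(X;Y)+(1-\lambda)I(X;Y')$ since $I(X;Y)$ is convex in the channel for fixed $P_X$. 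Hence $R_{\ell\mkern-2mu\ell}(\lambda D+(1-\lambda)D')\le \lambda R_{\ell\mkern-2mu\ell}(D)+(1-\lambda)R_{\ell\mkern-2mu\ell}(D')$; finiteness of the mixed distortion is automatic here because $D_1,D_2$ are finite.

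Finally I would write an arbitrary $D\in[D_1,D_2]$ as $D=\lambda D_1+(1-\lambda)D_2$ with $\lambda\in[0,1]$ and combine the pieces:
\[
R_{\ell\mkern-2mu\ell}(D)\;\le\;\lambda R_{\ell\mkern-2mu\ell}(D_1)+(1-\lambda)R_{\ell\mkern-2mu\ell}(D_2)\;=\;\lambda\bigl(H(X)-D_1\bigr)+(1-\lambda)\bigl(H(X)-D_2\bigr)\;=\;H(X)-D,
\]
using that $D\mapsto H(X)-D$ is affine. Together with the lower bound $R_{\ell\mkern-2mu\ell}(D)\ge R_{\ell}(D)=H(X)-D$ from Theorem~\ref{thm:logloss}, this yields equality throughout $[D_1,D_2]$.

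There is essentially no hard step: the corollary is the elementary observation that a convex function which dominates an affine function and agrees with it at two points must coincide with it on the whole segment between them. The only points needing a little care are invoking convexity of $R_{\ell\mkern-2mu\ell}$ cleanly (ensuring the mixture stays within channels into $\mathcal{U}$ and has finite distortion, which it does since the endpoints are finite and feasible) and checking that $D_1,D_2$ genuinely bracket a feasible sub-interval, both of which follow directly from Theorem~\ref{thm:prop1}.
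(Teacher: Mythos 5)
Your proof is correct and takes the same route as the paper: the paper merely remarks that the corollary ``follows as a consequence of the fact that $(X,P_{X|U})$ has multiple special operating points and $R_{\ell\mkern-2mu\ell}(D)$ touches the linear log-loss trade-off at all these points,'' leaving the convexity argument implicit. You supply exactly that missing convexity step (mixture of channels, linearity of expected distortion, convexity of $I(X;Y)$ in $W_{Y|X}$), combined with the lower bound from Theorem~\ref{thm:logloss} and the endpoint values from Theorem~\ref{thm:prop1}-(iii), which is the intended argument.
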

The corollary follows as a consequence of the fact that $(X, P_{X|U})$ has multiple special operating points and $R_{\ell\mkern-2mu\ell}(D)$ touches the linear log-loss trade-off at all these points. In Fig.~\ref{fig:example2} we provide an example where this is the case.

\section{Connection to General Rate-Distortion Problems}\label{sec:classical-rd}
In this section, we establish correspondence between lossy compression of a source $X\sim P_X$ under a general distortion measure $d$ and log-likelihood loss. Given a source $X~\sim P_X$ and a distortion measure $d:\mathcal X\times{\mathcal{Y}}\rightarrow [0,\infty]$. Suppose there exists a $\lambda>0$ and a nonnegative function
$\mu(x,\lambda)$, independent of $y$, such that for every
$y\in\mathcal{Y}$ the following defines a valid conditional distribution
\begin{equation}
    P^{\lambda}_{X|Y}(x|y)= \mu(x,\lambda)e^{-\lambda d(x,y)},\label{eq:condition}
\end{equation}
Then the rate-distortion problem for $(X,d)$ at distortion level $D$ can be reformulated as a log-likelihood loss problem $(X,P_{X|U})$ at distortion
$\widetilde{D}$ related to $D$ via an affine mapping.

Given such a $(X,d)$ satisfying~\eqref{eq:condition}. Now consider compressing the same source $X\sim P_X$ under log likelihood loss, i.e. the problem $(X, P_{X|U})$ with  $P_{X|U}$ chosen to coincide with $P^{\lambda}_{X|Y}$ above. In other words, we have
    \begin{align}
        d_{\ell\mkern-2mu\ell}(x,y) &= -\log P_{X|U}(x|y)\\
        &= - \log \Big(\mu(x,\lambda)e^{-\lambda d(x,y)}\Big)\\
        &= \lambda d(x,y) - \log(\mu(x,\lambda)).
    \end{align}
By taking expectation on both sides w.r.t $(X,Y)$, we have
   \begin{align}
       \mathbb{E}_{XY}[d_{\ell\mkern-2mu\ell}(X,Y)] = \lambda \mathbb{E}_{XY}[d(X,Y)] - \mathbb{E}_X[\log(\mu(X,\lambda))]
   \end{align}
Then we have 
\begin{align}
       R_{\ell\mkern-2mu\ell}(\widetilde{D}) &=\min_{W_{Y|X}:\mathbb{E}_{XY}[d_{\ell\mkern-2mu\ell}(X,Y)]\leq \widetilde{D}}I(X;Y)\\
       &= \min_{W_{Y|X}:\mathbb{E}_{XY}[d(X,Y)] \leq \frac{\widetilde{D}+\mathbb{E}_X[\log(\mu(X,\lambda))]}{\lambda}}I(X;Y)\\
       &=R(D) \label{eq:equivalence}
\end{align}  
where we chose 
\begin{align}
\widetilde{D} = \lambda D -\mathbb{E}_X[\log(\mu(X,\lambda))].\label{eq:choice}
\end{align}
In other words, if $P^{\lambda}_{X|Y}$ in \eqref{eq:condition} exists then $(X,d)$  and $(X, P_{X|U})$ with $P_{X|U}=P^{\lambda}_{X|Y}$ are equivalent in the sense that the corresponding RDFs are related through an affine transformation of the distortion value.  This shows that the log-likelihood distortion problem $(X,P_{X|U})$  is as general as $(X,d)$. Therefore, we also cannot expect to characterize the RDF for $(X,P_{X|U})$ in full generality unless we can characterize the RDF for $(X,d)$ for any $X$ and $d$.

Therefore, in the sequel we focus on a class of rate-distortion problems $(X,d)$ for for which the rate-distortion function can be computed by solving a single parameter optimization problem.

\begin{theorem}\label{thm:dual}
    Given an information source $X~\sim P_X$ with entropy $H(X)$ and a distortion measure $d(\cdot,\cdot)$, let $\mathcal{I}$ denote the set of all $\lambda>0$  such that there exists 
     \begin{enumerate}[(i)]
     \item a real-valued function $\mu(\cdot, \lambda): \mathcal{X} \rightarrow [0,\infty)$
        \item a coupling $P^{\lambda}_{Y,X}$ on $(\mathcal{Y} \times \mathcal{X})$, such that $\forall$ $(y,x)$ we have $P^{\lambda}_{X|Y}(x|y)= \mu(x,\lambda)e^{-\lambda d(x,y)}$.
    \end{enumerate}
If $\mathcal{I}$ is non-empty, i.e., $\mathcal{I} \neq \varnothing$, the RDF $R(D)$ for $(X,d)$ can be 
expressed as a single-parameter optimization problem i.e.,
\begin{align}
    R(D) = \max_{\lambda \in \mathcal{I}} \Big(H(X) + \mathbb{E}_X[\log(\mu(X,\lambda))] -\lambda D \Big)\label{eq:single}
\end{align}
\end{theorem}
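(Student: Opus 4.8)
The plan is to prove the two inequalities in \eqref{eq:single} separately, in both cases routing through the equivalence \eqref{eq:equivalence}--\eqref{eq:choice} and Theorem~\ref{thm:logloss}. The point is that each $\lambda\in\mathcal{I}$ turns $(X,d)$ into a log-likelihood problem $(X,P^{\lambda}_{X|Y})$ for which the coupling $P^{\lambda}_{Y,X}$ is \emph{consistent} in the sense of Theorem~\ref{thm:prop1}(iii); the corresponding special operating point, transported back through the affine map \eqref{eq:choice}, produces a supporting line of the convex function $R(\cdot)$, and conversely every supporting line on the active range of $R$ arises this way from an optimal Lagrange multiplier.

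\textbf{Lower bound $R(D)\ge\sup_{\lambda\in\mathcal{I}}(\cdots)$.} Fix $\lambda\in\mathcal{I}$ and set $P_{X|U}:=P^{\lambda}_{X|Y}$ in Definition~\ref{def:srd}. The computation preceding the theorem gives $d_{\ell\mkern-2mu\ell}(x,y)=\lambda d(x,y)-\log\mu(x,\lambda)$, hence $R(D)=R_{\ell\mkern-2mu\ell}(\widetilde{D})$ with $\widetilde{D}=\lambda D-\mathbb{E}_X[\log\mu(X,\lambda)]$ by \eqref{eq:equivalence}--\eqref{eq:choice}. A short check using the explicit forms of $D_{\min},D_{\max}$ in Theorem~\ref{thm:prop1}---which for this $P_{X|U}$ equal $\lambda\,\mathbb{E}_X[\min_y d(X,y)]-\mathbb{E}_X[\log\mu(X,\lambda)]$ and $\lambda\,\min_y\mathbb{E}_X[d(X,y)]-\mathbb{E}_X[\log\mu(X,\lambda)]$ respectively---shows the affine map $D\mapsto\widetilde D$ carries the active range of $R(\cdot)$ onto $[D_{\min},D_{\max}]$, so the identity is valid there. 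Invoking Theorem~\ref{thm:logloss} and $R_{\ell}(\widetilde D)=H(X)-\widetilde D$,
\begin{align}
R(D)=R_{\ell\mkern-2mu\ell}(\widetilde D)\ \ge\ R_{\ell}(\widetilde D)=H(X)-\widetilde D=H(X)+\mathbb{E}_X[\log\mu(X,\lambda)]-\lambda D .
\end{align}
Taking the supremum over $\lambda\in\mathcal{I}$ yields the inequality. (Equivalently one may argue by weak Lagrangian duality and compute the dual function via the Blahut--Arimoto fixed point $W(y|x)\propto Q_Y^{\lambda}(y)e^{-\lambda d(x,y)}$; the log-loss route is shorter here.)

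\textbf{Upper bound $R(D)\le\max_{\lambda\in\mathcal{I}}(\cdots)$.} For $D$ in the interior of the active range, standard Lagrangian duality for the rate--distortion function yields strong duality and an optimal test channel $W^{\star}_{Y|X}$ with $\mathbb{E}[d(X,Y)]=D$; its KKT stationarity conditions give a multiplier $\lambda^{\star}>0$, an output marginal $Q^{\star}_Y$, and the exponential form $W^{\star}(y|x)=Q^{\star}_Y(y)e^{-\lambda^{\star} d(x,y)}/Z^{\star}_x$ on $\mathrm{supp}(Q^{\star}_Y)$, where $Z^{\star}_x=\sum_y Q^{\star}_Y(y)e^{-\lambda^{\star} d(x,y)}$. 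Then $P^{\lambda^{\star}}_{Y,X}(y,x):=P_X(x)W^{\star}(y|x)$ is a coupling with $P^{\lambda^{\star}}_{X|Y}(x|y)=\mu(x,\lambda^{\star})e^{-\lambda^{\star} d(x,y)}$ for $\mu(x,\lambda^{\star}):=P_X(x)/Z^{\star}_x$, so $\lambda^{\star}\in\mathcal{I}$. Since $P^{\lambda^{\star}}_{Y,X}$ is consistent with $(X,P^{\lambda^{\star}}_{X|Y})$, Theorem~\ref{thm:prop1}(iii) and Theorem~\ref{thm:logloss} give, at the special operating point $\widetilde D^{\star}=H(X\mid Y)$ evaluated under $P^{\lambda^{\star}}_{Y,X}$, the \emph{equality} $R_{\ell\mkern-2mu\ell}(\widetilde D^{\star})=R_{\ell}(\widetilde D^{\star})=H(X)-\widetilde D^{\star}$. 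Because $W^{\star}$ already matches the Bayes channel of $P^{\lambda^{\star}}_{X|Y}$, the KL term in \eqref{eq:comp:srd} vanishes, so the induced log-likelihood distortion equals $H(X\mid Y)$; combined with complementary slackness this gives $\widetilde D^{\star}=\lambda^{\star}D-\mathbb{E}_X[\log\mu(X,\lambda^{\star})]$. Unwinding,
\begin{align}
R(D)=R_{\ell\mkern-2mu\ell}(\widetilde D^{\star})=H(X)-\widetilde D^{\star}=H(X)+\mathbb{E}_X[\log\mu(X,\lambda^{\star})]-\lambda^{\star}D\ \le\ \max_{\lambda\in\mathcal{I}}(\cdots),
\end{align}
and together with the lower bound this proves \eqref{eq:single}, with the maximum attained at $\lambda^{\star}$.

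\textbf{Main obstacle.} The delicate part is the upper bound: establishing strong duality and, above all, verifying that the optimal multiplier $\lambda^{\star}$ genuinely lies in $\mathcal{I}$. This needs a careful KKT analysis that (i) handles the support of $Q^{\star}_Y$, since the exponential form is only forced there; (ii) treats the degenerate regimes $R(D)=0$ and $D=D_{\min}$ (where $\lambda^{\star}$ may degenerate to $0$ or $\infty$, so the formula should be read on the active range); and (iii) confirms that the affine reparametrisation $D\mapsto\widetilde D$ indeed matches the feasibility intervals of the two problems. The lower bound, by contrast, is essentially immediate from Theorem~\ref{thm:logloss} and the already-established equivalence.
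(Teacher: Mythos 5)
Your lower-bound derivation is a genuinely different route from the paper's: you push $(X,d)$ through the affine reparametrisation into a log-likelihood problem $(X,P^{\lambda}_{X|Y})$ and then quote Theorem~\ref{thm:logloss} to get $R_{\ell\mkern-2mu\ell}(\widetilde D)\ge H(X)-\widetilde D$, whereas the paper's Appendix proof works entirely inside the Lagrangian: it uses KKT stationarity to pin down the exponential form of $W^{*}_{Y|X}$, then directly computes the dual function $g(\lambda)=H(X)+\mathbb{E}_X[\log\mu(X,\lambda)]-\lambda D$ and appeals to strong duality. Your route is appealing because it makes Theorem~\ref{thm:dual} a corollary of the machinery already developed in Section~III (the log-loss comparison is exactly the weak-duality inequality in disguise, specialised via the reparametrisation), and your check that the affine map carries the active range of $R(\cdot)$ onto $[D_{\min},D_{\max}]$ is correct and worth making explicit. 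Your upper bound is structurally the same as the paper's: strong duality plus the KKT exponential form, with the observation that the optimal multiplier $\lambda^{\star}$ and the normalisation $\mu(x,\lambda^{\star})=P_X(x)/Z^{\star}_x$ realise a coupling in $\mathcal{I}$. What you add is the identification of the induced operating point $\widetilde D^{\star}=H(X|Y)$ as the special operating point of Theorem~\ref{thm:prop1}(iii), which nicely closes the circle but is computationally equivalent to the paper's direct evaluation of $g(\lambda^{\star})$. The one substantive caveat --- which you flag honestly under ``main obstacle'' and which the paper elides in the sentence asserting that ``the set is $\mathcal{I}$'' --- is that KKT only forces $\sum_x\mu(x,\lambda^{\star})e^{-\lambda^{\star}d(x,y)}=1$ for $y\in\mathrm{supp}(Q^{\star}_Y)$, with $\le 1$ elsewhere, so condition~(ii) is only guaranteed to hold after restricting $\mathcal{Y}$ to this support (or in full-support cases such as the paper's worked examples). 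Neither your argument nor the paper's resolves this; it is a shared implicit regularity assumption rather than a defect specific to your proposal.
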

\begin{proof}
The proof is deferred to the Appendix~\ref{app:dual}. It is based on solving the dual form for $R(D)$  through the Lagrangian form and applying KKT conditions while invoking the feasibility conditions to restrict the optimization to $\lambda\in \mathcal{I}$. The dual form of $R(D)$ also appears in~\cite{kostina,gallager}.
\end{proof}

\begin{remark}
    Theorem~\ref{thm:dual} characterizes a class of rate–distortion problems for which a `generalized Shannon-type lower bound' is tight, yielding the single-parameter representation in~\eqref{eq:single}. In the special case of continuous sources with difference distortions, this tight bound reduces to the classical Shannon lower bound~\cite{shannon_lb}.
\end{remark}

Note that the conditions of Theorem~\ref{thm:dual} are stronger than the condition in \eqref{eq:condition} as they also require the existence of  $Y\sim Q_Y$ (equivalently, a coupling $P^\lambda_{X,Y}$ with $P^\lambda_X = P_X$) such that a single function $\mu(x,\lambda)$ simultaneously normalizes $P^\lambda_{X|Y}(\cdot|y)$ for all $y$. The set $\mathcal{I}$  is a collection of those values of $\lambda$ for which such a coupling exists; accordingly, the maximization in~\eqref{eq:single} is restricted to $\lambda\in\mathcal{I}$.

Since the conditions of Theorem~\ref{thm:dual} imply \eqref{eq:condition}, the RDF for the corresponding $(X, P_{X|U})$ with $P_{X|U}=P^{\lambda}_{X|Y}$  can also be expressed in the form  \eqref{eq:single} through the translation in \eqref{eq:equivalence}. This gives us a family of log-likelihood loss problems for which we can characterize the corresponding RDF as we illustrate in the following examples.
\begin{example} [Binary Source with Hamming Distortion]\label{ex:a1} Let $X \sim \mathrm{Ber}(p)$, where $p \in [0,1/2]$, and the distortion measure is the hamming distortion i.e.,
\begin{align}
d_H (x,y)=
\begin{cases}
    0 \text{ }; \text{ } \text{if} \text{ } x=y\\
    1 \text{ } ; \text{ } \text{if} \text{ } x \neq y\\
\end{cases}
\end{align}
\end{example}
\noindent\emph{Theorem~\ref{thm:dual} (Existence of coupling):} Fix a $\lambda > 0$. Then, $P^{\lambda}_{X|Y}$ (due to condition (ii)) is given by 
\begin{align}
P^{\lambda}_{X|Y} (x|y) =
\begin{bmatrix}
\dfrac{1}{1+e^{-\lambda}} & \dfrac{e^{-\lambda}}{1+e^{-\lambda}} \\[2pt]
\dfrac{e^{-\lambda}}{1+e^{-\lambda}} & \dfrac{1}{1+e^{-\lambda}}
\end{bmatrix}
\end{align}
where $\mu(0,\lambda)=\mu(1,\lambda)=\frac{1}{1+e^{-\lambda}}$. Let $Y~\sim \mathrm{Ber}(q)$ for some $q$. We will show that there exists a $\lambda > 0$ which guarantees the existence of a coupling. Using Bayes rule, $P^{\lambda}_{Y|X}$ is
\begin{align}
P^{\lambda}_{Y|X}(y| x)=
\begin{bmatrix}
\dfrac{\left(\frac{1}{1+e^{-\lambda}}\right)(1-q)}{1-p} &
\dfrac{\left(\frac{e^{-\lambda}}{1+e^{-\lambda}}\right)q}{1-p}
\\[2pt]
\dfrac{\left(\frac{e^{-\lambda}}{1+e^{-\lambda}}\right)(1-q)}{p} &
\dfrac{\left(\frac{1}{1+e^{-\lambda}}\right)q}{p}
\end{bmatrix}
\end{align}
We see that the rows of $P^{\lambda}_{Y|X}$ sum to $1$ iff 
\begin{align}
q
=\frac{p(1+e^{\lambda})-1}{e^{\lambda}-1}
\end{align}
For $q \in [0,1]$, we have that $\lambda \geq \log(\frac{1-p}{p})$. Thus, given $p \in [0,1/2]$ we observe that a valid coupling $P^{\lambda}_{XY}$ exists for any $\lambda \in \mathcal{I}:=[\log(\frac{1-p}{p}),\infty)$. Therefore, from~\eqref{eq:single} 
\begin{align}
    R(D) &= \max_{\lambda \in \mathcal{I}}\Big(H(p)+\log\frac{1}{1+e^{-\lambda}}-\lambda D\Big)\\
    &=H(p) + \log(1-D)-D\log\Big(\frac{1-D}{D}\Big)\\
    &= H(p)-H(D)
\end{align}
where the maximizer $\lambda^{*}=\log(\frac{1-D}{D}) \in \mathcal{I}$.
\smallbreak
\noindent\emph{Translation into log-likelihood loss $(X,P_{X|U})$}:
Fix a $\lambda_{0} \in \mathcal{I}$. Now, choose $P_{X|U}$ in $(X,P_{X|U})$ to be $P^{\lambda_0}_{X|Y}$. Thus,
\begin{align}
P^{}_{X|U} (x|u) =
\begin{bmatrix}
\dfrac{1}{1+e^{-\lambda_0}} & \dfrac{e^{-\lambda_0}}{1+e^{-\lambda_0}} \\[2pt]
\dfrac{e^{-\lambda_0}}{1+e^{-\lambda_0}} & \dfrac{1}{1+e^{-\lambda_0}}
\end{bmatrix}
\end{align}
From \eqref{eq:choice} we have
\begin{align}
    \widetilde{D}
    &= \lambda_0 D -\mathbb{E}_{X}\left[\log\left(\frac{1}{1+e^{-\lambda_0}}\right)\right]\\
    &=\lambda_0 D + \log(1+e^{-\lambda_0})
\end{align}
Therefore, we obtain that 
\begin{align}
    R_{\ell\mkern-2mu\ell}(\widetilde{D}) = H(p)-H\left(\frac{\widetilde{D}-\log(1+e^{-\lambda_0})}{\lambda_0}\right)
\end{align}
where $\widetilde{D} \in [\log(1+e^{-\lambda_0}),\lambda_0p +\log(1+e^{-\lambda_0})]$.\\

\begin{example} [Gaussian source with squared distortion]\label{ex:a2}Let $X \sim \mathcal{N}(0,\sigma^2)$ and the distortion measure be  the squared error distortion i.e., $d(x,y)=|x-y|^2$.
\end{example}
\noindent\emph{Theorem~\ref{thm:dual} (Existence of coupling): }Fix a $\lambda > 0$. Let the conditional distribution $P^{\lambda}_{X|Y}$ be as in condition (ii) of Theorem~\ref{thm:dual} i.e.,
\begin{align}
  P^{\lambda}_{X|Y}(x|y)=\mu(x,\lambda)e^{-\lambda(x-y)^2}  
\end{align}
$P^{\lambda}_{X|Y}$ is a valid conditional distribution for all $(x,y)$ iff $\mu(x,\lambda) = \sqrt{\lambda/\pi}$. Thus,
\begin{align}
  P^{\lambda}_{X|Y}(x|y)=\sqrt{\frac{\lambda}{\pi}}e^{-\lambda(x-y)^2}  \label{eq:laplacec}
\end{align}
Assume that a valid coupling exists, we will now show the existence of a consistent PDF of $Y$ using  use the approach of characteristic functions. Let $\phi_X(t)=\mathbb{E}[e^{itX}]$ and $\phi_Y(t)=\mathbb{E}[e^{itY}]$ denote the characteristic functions of $X$ and $Y$ respectively. On using towering property, we have
\begin{align}
\phi_X(t)=\mathbb{E}\left[e^{itX}\right]
      =\mathbb{E}\left[\mathbb{E}\left[e^{itX}\mid Y\right]\right]
\end{align}
We are given $Y=y$, $X=y+Z$ where $Z$ has the gaussian density $\sqrt{\frac{\lambda}{\pi}}e^{-\lambda|z|^2}$, Thus, we have
\begin{align}
\mathbb{E}\left[e^{itX}\mid Y=y\right]&=e^{ity}\phi_Z(t) \\
\phi_Y(t)&=\frac{\phi_X(t)}{\phi_Z(t)} \label{eq:char2}
\end{align}
Since $\phi_X(t)=e^{-\sigma^2 t^2/2}$ and $\phi_Z(t)=e^{- t^2/4\lambda}$, substituting back into~\eqref{eq:char2} gives
\begin{align}
    \phi_Y(t)&= \exp\left(-\frac{\sigma^2 t^2}{2}+\frac{t^2}{4\lambda}\right)\\
    &=\exp\left( -\frac{t^2}{2}\left( \sigma^2 - \frac{1}{2\lambda}\right)\right)
\end{align}
Using Fourier inversion, we see that $Y$ has a gaussian density i.e., $Y \sim \mathcal{N}\left(0,\left(\sigma^2-\frac{1}{2\lambda}\right)\right)$ and consequently $W_{Y|X} \sim \mathrm{AWGN}\left(\frac{1}{2\lambda}\right)$. Thus, a valid coupling (and a valid marginal $Y$) exists for every $\lambda > \frac{1}{2 \sigma^2}$. Thus, $\mathcal{I} =\left(\frac{1}{2 \sigma^2},\infty\right)$.

We can now use the single parameter optimization to obtain $R(D)$ for $(X,d)$,
\begin{align}
    R(D) &= \max_{\lambda \in \mathcal{I}}\Big(H(X)+\log\sqrt{\frac{\lambda}{\pi}}-\lambda D\Big)\\
    &=\frac{1}{2}\log(2\pi e\sigma^2) + \frac{1}{2}\log\frac{1}{2\pi D}-\frac{1}{2}\\
    &= \frac{1}{2}\log \left(\frac{\sigma^2}{D} \right)
\end{align}
where the maximizer $\lambda^{*}= 1/2D \in \mathcal{I}$.\\

\noindent\emph{Translation into log-likelihood loss $(X,P_{X|U})$}:  Fix any $\lambda_0 \in \mathcal{I}$. Define the channel $P_{X|U}$ to be the backward test channel for $(X,d)$ at $\lambda_0$. Then,
\begin{align}
P^{}_{X|U} (x|u) =
\sqrt{\frac{\lambda_0}{\pi}}e^{-\lambda_0(x-u)^2} 
\end{align}
From \eqref{eq:choice} we have
\begin{align}
    \widetilde{D}
    &= \lambda_0 D -\mathbb{E}_{X}\left[\log\sqrt{\lambda_0/\pi}\right]\\
    &=\lambda_0 D - \log (\sqrt{\lambda_0/\pi})
\end{align}
Therefore, we obtain that 
\begin{align}
    R_{\ell\mkern-2mu\ell}(\widetilde{D}) = \frac{1}{2}\log \left(\frac{\sigma^2 \lambda_0}{\widetilde{D}+\log(\sqrt{\lambda_0/\pi})} \right)
\end{align}
where $\widetilde{D} \in [\log(\sqrt{\pi/\lambda_0}),\sigma^2\lambda_0 +\log(\sqrt{\pi/\lambda_0})]$.\\

There are several other classical rate-distortion problems which can be translated into rate-distortion with log-loss framework $(X,P_{X|U})$, such as binary source with asymmetric hamming distortion, Laplace source with absolute error distortion, etc. However, not all of them satisfy the conditions in Theorem~\ref{thm:dual}. One such example is the Gaussian source with absolute error distortion.

\section{Connection to Rate-Distortion-Perception}\label{sec:rdp}
In this section, we show that the log-likelihood rate distortion framework provides an achievable scheme for attaining rate distortion with prefect perception.
\begin{definition}[Rate Distortion-Perception~\cite{rdp,rdp2}]
Given an information source $X \sim P_X$ taking values in $\mathcal{X}$. Let $Y \in \mathcal {X}$ be the lossy reconstruction of $X$ under the distortion measure
$d:\mathcal X\times{\mathcal{Y}}\rightarrow [0,\infty]$ and perception measure $\gamma:\Delta(\mathcal{X}) \times \Delta(\mathcal{X})\rightarrow [0,\infty]$. We assume $\gamma(P, Q) = 0$ iff $P=Q$. Then, the rate distortion perception function for $(X,d,\gamma)$ is given by
\begin{align}\label{eq:rdp-perfect1}
R(D,\Gamma) =\min_{\substack{
W_{Y|X}:\\
\mathbb{E}_{XY}[d(X,Y)]\leq D, \text{ }
\gamma(P_X;Q_Y) \leq \Gamma
}}  I(X;Y)
\end{align}
for any $D,\Gamma \geq 0$. At $\Gamma =0$, we say that perfect perception is achieved.
\end{definition}
Next, we define the Completely Positive (CP) matrices which will be later used to state our result in Theorem~\ref{thm:perc}.
\begin{definition}[Completely Positive (CP) matrix~\cite{berman_completely_2003}]\label{def:cp}

A matrix $A \in \mathbb{R}^{n \times n}$ is called \emph{completely positive} if there exist an integer $r \ge 1$ and a matrix
$B \in \mathbb{R}_+^{n \times r}$ (i.e., $B_{ij} \ge 0$ for all $i,j$) such that
\begin{equation}
A = B B^\top .
\end{equation}
Equivalently, $A$ is the Gram matrix of a finite collection of vectors in $\mathbb{R}_+^n$:
$A_{ij} = \langle b_i, b_j \rangle$ with $b_i \in \mathbb{R}_+^r$.
\end{definition}

From Theorem~\ref{thm:prop1} (iii), given $(X, P_{X|U})$ if there exists $(U,X)$ consistent with $(X, P_{X|U})$ and $D=H(X|U)$, then the optimal reconstruction $Y$ is such that $P_Y = P_U$. This observation can be translated to an achievable strategy for the rate-distortion problem with perfect perception as follows. Given $X$ assume we choose any $P_{Z|X}(z|x)$ and process $X$ into $Z$ according to $P_{Z|X}$. Then, we compress $Z$ under the log-likelihood distortion $d_{\ell\mkern-2mu\ell}(z,y) = -\log P_{Z|X}(z|y)$ at distortion level $H(Z|X)$ which ensures that the reconstruction $Y$ has the distribution $P_X$. 

End to end, this yields an achievable scheme for the rate-distortion-perception problem with perfect perception. In other words, perfect perception can be achieved by first taking $X$ and passing it through an arbitrary noisy channel $P_{Z|X}(z|x)$ and then lossily compressing the noisy observation under the log-likelihood distortion induced by the noisy channel at the fixed distortion level $H(Z|X)$.\\

In \cite{rdp}, a similar construction is used in the case of MSE distortion to upper bound $R(D,0)$. Below we characterize when the above scheme based on log-likelihood loss is optimal.

\begin{theorem}\label{thm:perc}
Let $(X,d,\gamma)$ describe a rate-distortion-perception problem such that $|\mathcal{X}|<\infty$ and $d(\cdot,\cdot)$ is such that the element-wise exponential matrix $V$, defined by $V(x,y):=\exp_H{(-\lambda d(x,y))}$ is completely positive (CP), for every $\lambda>0$.
Then, there exists a $Z$ such that the lossy compression of $(Z,P_{Z|X})$ under log-likelihood loss at $H(Z|X)$ induces a coupling $[W_{XZY}]_{XY}$ which achieves rate-distortion with zero perception error i.e.,
\begin{align}
\label{eq:rdp-perfect}
R(D,0) &= \min_{P_{Z|X}:} I(X;Y)\\
&\text{subject to }\bbE_{XY}\left[d(X,Y)\right]\leq D,\\
& W_{X,Z,Y}=P_XP_{Z|X}P_{X|Z}
\end{align}
\end{theorem}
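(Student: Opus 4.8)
The plan is to prove the two inequalities defining the optimization in \eqref{eq:rdp-perfect} separately. On one side, the construction described just before the theorem already shows that for any choice of $P_{Z|X}$, the scheme — pass $X$ through $P_{Z|X}$ to obtain $Z$, then compress $Z$ under $d_{\ell\mkern-2mu\ell}(z,y)=-\log P_{Z|X}(z|y)$ at distortion level $H(Z|X)$ — produces a reconstruction $Y$ with $P_Y=P_X$ (by Theorem~\ref{thm:prop1}-(iii), since $(Z,P_{Z|X})$ has the special operating point $H(Z|X)$ and the optimal $Y$ is a sample from the posterior, hence $Y\sim P_Z$; but we must arrange $P_Z=P_X$, see below) and rate $I(X;Y)=I(Z;Y)$. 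So any feasible $P_{Z|X}$ in the right-hand side yields an achievable point for $R(D,0)$, giving $R(D,0)\le\min_{P_{Z|X}}I(X;Y)$. Actually the cleanest route is to note that the right-hand side of \eqref{eq:rdp-perfect} is by construction an upper bound on $R(D,0)$: every coupling $W_{X,Z,Y}=P_X P_{Z|X} P_{X|Z}$ appearing there has the form ``$Z$ distributed as... then $Y$ drawn from the posterior of $Z$ given $Y$,'' and one checks that $Y\sim P_X$ and the Markov chain $X-Z-Y$ makes $I(X;Y)\le I(Z;Y)$, so it is a valid feasible point for the perception-constrained problem. This direction uses only Theorem~\ref{thm:prop1}-(iii) and data processing and needs no CP hypothesis.

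The substantive direction is the converse: $R(D,0)\ge\min_{P_{Z|X}}I(X;Y)$, i.e. the log-likelihood-loss scheme is \emph{optimal} and not merely achievable. Here is where the completely-positive hypothesis enters. Take any $W_{Y|X}^{\star}$ achieving $R(D,0)$, so $\bbE[d(X,Y)]\le D$ and $P_Y=P_X$. By Theorem~\ref{thm:dual} (applied at the optimal Lagrange multiplier $\lambda^\star$ associated with the distortion constraint), the optimal backward channel has the exponential-family form $P^{\star}_{X|Y}(x|y)=\mu(x,\lambda^\star)e^{-\lambda^\star d(x,y)}$, i.e. it factors through the matrix $V(x,y)=\exp_H(-\lambda^\star d(x,y))$. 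The joint law is then $W_{XY}(x,y)=\mu(x,\lambda^\star)V(x,y)Q_Y(y)$, and summing over $x$ forces $\sum_x\mu(x,\lambda^\star)V(x,y)=1$ for each $y$; together with the perfect-perception constraint $P_Y=P_X$ this pins down a specific symmetric-in-structure joint. Now use CP-ness of $V$: write $V=BB^{\top}$ with $B\in\bbR_+^{n\times r}$. This nonnegative factorization is exactly what lets us introduce a latent variable $Z$ taking $r$ values, with $P_{Z|X}$ and $P_{X|Z}$ read off from the (suitably normalized) columns of $B$, so that $W_{XY}$ is realized as $X-Z-Y$ with $W_{XZY}=P_X P_{Z|X}P_{X|Z}$ — precisely the form constrained in \eqref{eq:rdp-perfect}. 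The rate is preserved because the factorization is chosen so that $I(X;Y)=I(X;Y)$ with the induced $Z$ being a sufficient intermediary, and the distortion and perception constraints are inherited. Hence the optimal perception-constrained coupling itself lies in the feasible set of the right-hand side, giving the reverse inequality.

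The main obstacle I anticipate is the bookkeeping in that nonnegative-factorization step: given $V=BB^{\top}$ one must (i) rescale the rows/columns of $B$ by the marginals $P_X$ and the normalizers $\mu(x,\lambda^\star)$, $Q_Y$ so that the resulting row-stochastic matrices $P_{Z|X}$ and $P_{X|Z}$ actually reproduce $W_{XY}$ on the nose, (ii) verify that the induced $Z$-marginal is consistent (this is where $P_Y=P_X$, i.e. the symmetry built into the perfect-perception optimum, is used), and (iii) check that $H(Z|X)$ is indeed the distortion level at which $(Z,P_{Z|X})$ operates, matching the construction in the paragraph preceding the theorem. A secondary subtlety is that Theorem~\ref{thm:dual} requires $\lambda^\star\in\mathcal{I}$ (existence of a consistent $Y$-marginal), which is guaranteed here because the perception-constrained optimizer supplies exactly such a coupling with $P_Y=P_X$; one should state this explicitly rather than invoking Theorem~\ref{thm:dual} as a black box. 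Everything else — data processing for the rate, linearity of the distortion constraint under the affine translation \eqref{eq:choice}, and the reduction of the $d_{\ell\mkern-2mu\ell}$ problem to the special operating point — is routine given the earlier results.
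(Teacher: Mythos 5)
Your high-level plan (treat the right-hand side as a scheme that yields $P_Y=P_X$ automatically, then argue the optimizer of $R(D,0)$ lies in the feasible set) matches the structure of the paper's proof, and the achievability direction you sketch is fine. But the converse has a genuine gap that you file under ``bookkeeping'' when it is in fact the crux of the paper's argument.

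The KKT analysis for $R(D,0)$ gives $W^{*}_{X,Y}(x,y)=e^{a(x)}\,V(x,y)\,e^{b(y)}$, i.e.\ $W^{*}_{X,Y}=D_1 V D_2$ for two \emph{a priori distinct} positive diagonal matrices $D_1,D_2$. Complete positivity of $V$ does not by itself transfer to $D_1 V D_2$: a CP matrix must be symmetric, and $D_1 V D_2$ is symmetric only if $D_1=D_2$ (given that $V$ is symmetric with positive diagonal). So before you can factorize $W^*_{X,Y}$ to extract $Z$, you must first show that the scaling is \emph{symmetric}, i.e.\ $W^*_{X,Y}=\Phi V \Phi$ for a single positive diagonal $\Phi$. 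The paper does this by (a) noting that CP of $V$ forces $d(\cdot,\cdot)$ to be symmetric, (b) combining symmetry of $d$ with uniqueness of the optimizer and $P_Y=P_X$ to conclude $W^*_{X,Y}(x,y)=W^*_{X,Y}(y,x)$, and (c) a shift argument on $a,b$ to extract $a^*=b^*$; only then does Proposition~2.5 of \cite{berman_completely_2003} give CP of $\Phi V\Phi$, whose factorization produces $Z$. Your proposal factorizes $V$ rather than $W^*_{X,Y}$ and treats the rescaling by $\mu(x,\lambda^*)$, $Q_Y$ as a cosmetic normalization; without the symmetry lemma the rescaled factors need not be CP factors of $W^*_{X,Y}$, and the latent $Z$ need not exist.

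A secondary issue: you invoke Theorem~\ref{thm:dual} for the exponential form of the optimal backward channel, but that theorem is stated for the unconstrained RDF $R(D)$, not for $R(D,0)$ which carries the extra $P_Y=P_X$ constraint (equivalently, a Lagrangian term in $y$). The correct form for $R(D,0)$ is the two-sided $e^{a(x)}e^{-\lambda d(x,y)}e^{b(y)}$, not the one-sided $\mu(x,\lambda)e^{-\lambda d(x,y)}Q_Y(y)$; the paper therefore runs a fresh KKT argument for the RDP problem rather than importing Theorem~\ref{thm:dual}, and your proposal should do the same.
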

\begin{proof}
    The Proof is deferred to the Appendix~\ref{app:perception}.
\end{proof}

Let $W^{*}_{Y|X}$ be the optimal conditional distribution achieving rate-distortion with zero perception error in~\eqref{eq:rdp-perfect1} i.e., $R(D,0)$, which induces the optimal joint coupling $W^{*}_{X,Y}$. Now, if the optimal coupling $W^{*}_{X,Y}$ is a CP matrix then, there exists a $Z$ such that $X - Z -Y$ and $P_{X|Z} = P_{Y|Z}$, and vice-versa. Then we can choose the induced $P_{Z|X}$ and process $X$ into $Z$ according to $P_{Z|X}$. Then, with $(Z, P_{Z|X})$ we compress $Z$ under the log-likelihood distortion $d_{\ell\mkern-2mu\ell}(z,y) = -\log P_{Z|X}(z|y)$ at the distortion level $H(Z|X)$ which ensures that the reconstruction $Y$ has the distribution $P_X$. Moreover, the optimal joint distribution $W^{*}_{X,Z,Y}$ induces the joint distribution on  $(X,Y)$ i.e., $[W^{*}_{X,Z,Y}]_{X,Y}$ which achieves the optimal rate at zero perception error.

\begin{remark}
    The condition that $V$ is CP is satisfied for many reasonable rate distortion problems.
    For example, it is satisfied when $d$ is squared distance on a finite subset of $\mathbb{R}^n$ as well as the Hamming distortion measure (see Appendix~\ref{app:sufficient} for details).
\end{remark}

\newpage
\bibliographystyle{IEEEtran}
\bibliography{IEEEabrv,ref}

\begin{thebibliography}{10}
\providecommand{\url}[1]{#1}
\csname url@samestyle\endcsname
\providecommand{\newblock}{\relax}
\providecommand{\bibinfo}[2]{#2}
\providecommand{\BIBentrySTDinterwordspacing}{\spaceskip=0pt\relax}
\providecommand{\BIBentryALTinterwordstretchfactor}{4}
\providecommand{\BIBentryALTinterwordspacing}{\spaceskip=\fontdimen2\font plus
\BIBentryALTinterwordstretchfactor\fontdimen3\font minus \fontdimen4\font\relax}
\providecommand{\BIBforeignlanguage}[2]{{%
\expandafter\ifx\csname l@#1\endcsname\relax
\typeout{** WARNING: IEEEtran.bst: No hyphenation pattern has been}%
\typeout{** loaded for the language `#1'. Using the pattern for}%
\typeout{** the default language instead.}%
\else
\language=\csname l@#1\endcsname
\fi
#2}}
\providecommand{\BIBdecl}{\relax}
\BIBdecl

\bibitem{li_strong_2018}
C.~T. Li and A.~E. Gamal, ``Strong {Functional} {Representation} {Lemma} and {Applications} to {Coding} {Theorems},'' \emph{IEEE Transactions on Information Theory}, vol.~64, no.~11, pp. 6967--6978, 2018.

\bibitem{Danpaper}
D.~Song, A.~Özgür, and T.~Weissman, ``A markov property of empirical distributions and the performance of compression-based denoisers,'' in \emph{2025 IEEE International Symposium on Information Theory (ISIT)}, 2025, pp. 1--6.

\bibitem{Daria}
D.~Reshetova, W.-N. Chen, and A.~Özgür, ``Training generative models from privatized data via entropic optimal transport,'' \emph{IEEE Journal on Selected Areas in Information Theory}, vol.~5, pp. 221--235, 2024.

\bibitem{f1}
H.~Gilbert, M.~Sandborn, D.~C. Schmidt, J.~Spencer-Smith, and J.~White, ``Semantic compression with large language models,'' in \emph{2023 Tenth International Conference on Social Networks Analysis, Management and Security (SNAMS)}, 2023, pp. 1--8.

\bibitem{f2}
M.~Li, R.~Jin, L.~Xiang, K.~Shen, and S.~Cui, ``Crossword: A semantic approach to text compression via masking,'' in \emph{ICASSP 2024 - 2024 IEEE International Conference on Acoustics, Speech and Signal Processing (ICASSP)}, 2024, pp. 9171--9175.

\bibitem{f3}
\BIBentryALTinterwordspacing
R.~Shen, H.~Wu, W.~Zhang, J.~Hu, and D.~Gunduz, ``Compression beyond pixels: Semantic compression with multimodal foundation models,'' 2025. [Online]. Available: \url{https://arxiv.org/abs/2509.05925}
\BIBentrySTDinterwordspacing

\bibitem{lav}
H.~Yu and L.~R. Varshney, ``Semantic compression with information lattice learning,'' in \emph{2024 IEEE International Symposium on Information Theory Workshops (ISIT-W)}, 2024, pp. 1--6.

\bibitem{sac}
\BIBentryALTinterwordspacing
Z.~Liang, K.~Niu, J.~Xu, and P.~Zhang, ``Semantic arithmetic coding using synonymous mappings,'' \emph{Entropy}, vol.~27, no.~4, 2025. [Online]. Available: \url{https://www.mdpi.com/1099-4300/27/4/429}
\BIBentrySTDinterwordspacing

\bibitem{srdt}
\BIBentryALTinterwordspacing
Y.-Q. Zhao, Z.-M. Ma, G.~Y. Li, S.~Yuan, T.~Ye, and C.~Zhou, ``Semantic rate-distortion theory with applications,'' 2025. [Online]. Available: \url{https://arxiv.org/abs/2509.10061}
\BIBentrySTDinterwordspacing

\bibitem{srd1}
\BIBentryALTinterwordspacing
J.~Chai, H.~Zhu, Y.~Xiao, G.~Shi, and P.~Zhang, ``On the rate-distortion theory for task-specific semantic communication,'' \emph{Entropy}, vol.~27, no.~8, 2025. [Online]. Available: \url{https://www.mdpi.com/1099-4300/27/8/775}
\BIBentrySTDinterwordspacing

\bibitem{srd2}
T.~Guo, Y.~Wang, J.~Han, H.~Wu, B.~Bai, and W.~Han, ``Semantic compression with side information: A rate-distortion perspective,'' \emph{arXiv preprint arXiv:2208.06094}, Aug 2022.

\bibitem{srdp1}
J.~Chai, Y.~Xiao, G.~Shi, and W.~Saad, ``Rate-distortion-perception theory for semantic communication,'' in \emph{2023 IEEE 31st International Conference on Network Protocols (ICNP)}, 2023, pp. 1--6.

\bibitem{courtade}
T.~A. Courtade and T.~Weissman, ``Multiterminal source coding under logarithmic loss,'' \emph{IEEE Transactions on Information Theory}, vol.~60, no.~1, pp. 740--761, 2014.

\bibitem{pmlr_shkel}
\BIBentryALTinterwordspacing
Y.~Shkel, M.~Raginsky, and S.~Verdú, ``Sequential prediction with coded side information under logarithmic loss,'' in \emph{Proceedings of Algorithmic Learning Theory}, ser. Proceedings of Machine Learning Research, F.~Janoos, M.~Mohri, and K.~Sridharan, Eds., vol.~83.\hskip 1em plus 0.5em minus 0.4em\relax PMLR, 07--09 Apr 2018, pp. 753--769. [Online]. Available: \url{https://proceedings.mlr.press/v83/shkel18a.html}
\BIBentrySTDinterwordspacing

\bibitem{shkel_it}
Y.~Y. Shkel and S.~Verdú, ``A single-shot approach to lossy source coding under logarithmic loss,'' \emph{IEEE Transactions on Information Theory}, vol.~64, no.~1, pp. 129--147, 2018.

\bibitem{blahut}
R.~Blahut, ``Computation of channel capacity and rate-distortion functions,'' \emph{IEEE Transactions on Information Theory}, vol.~18, no.~4, pp. 460--473, 1972.

\bibitem{arimoto}
S.~Arimoto, ``An algorithm for computing the capacity of arbitrary discrete memoryless channels,'' \emph{IEEE Transactions on Information Theory}, vol.~18, no.~1, pp. 14--20, 1972.

\bibitem{kostina}
V.~Kostina and S.~Verdu, ``Fixed-length lossy compression in the finite blocklength regime,'' \emph{IEEE Transactions on Information Theory}, vol.~58, no.~6, pp. 3309--3338, 2012.

\bibitem{gallager}
\BIBentryALTinterwordspacing
R.~Gallager, \emph{Information Theory and Reliable Communication}, ser. Courses and lectures.\hskip 1em plus 0.5em minus 0.4em\relax Wiley, 1968. [Online]. Available: \url{https://books.google.com/books?id=Uc3uAAAAMAAJ}
\BIBentrySTDinterwordspacing

\bibitem{shannon_lb}
C.~E. Shannon, \emph{Coding Theorems for a Discrete Source With a Fidelity CriterionInstitute of Radio Engineers, International Convention Record, vol. 7, 1959.}, 1993, pp. 325--350.

\bibitem{rdp}
\BIBentryALTinterwordspacing
Y.~Blau and T.~Michaeli, ``Rethinking lossy compression: The rate-distortion-perception tradeoff,'' in \emph{Proceedings of the 36th International Conference on Machine Learning}, ser. Proceedings of Machine Learning Research, K.~Chaudhuri and R.~Salakhutdinov, Eds., vol.~97.\hskip 1em plus 0.5em minus 0.4em\relax PMLR, 09--15 Jun 2019, pp. 675--685. [Online]. Available: \url{https://proceedings.mlr.press/v97/blau19a.html}
\BIBentrySTDinterwordspacing

\bibitem{rdp2}
J.~Chen, L.~Yu, J.~Wang, W.~Shi, Y.~Ge, and W.~Tong, ``On the rate-distortion-perception function,'' \emph{IEEE Journal on Selected Areas in Information Theory}, vol.~3, no.~4, pp. 664--673, 2022.

\bibitem{berman_completely_2003}
\BIBentryALTinterwordspacing
A.~Berman and N.~Shaked-Monderer, \emph{Completely {Positive} {Matrices}}.\hskip 1em plus 0.5em minus 0.4em\relax WORLD SCIENTIFIC, 2003, \_eprint: https://www.worldscientific.com/doi/pdf/10.1142/5273. [Online]. Available: \url{https://www.worldscientific.com/doi/abs/10.1142/5273}
\BIBentrySTDinterwordspacing

\bibitem{schoenberg}
\BIBentryALTinterwordspacing
I.~J. Schoenberg, ``Metric spaces and positive definite functions,'' \emph{Transactions of the American Mathematical Society}, vol.~44, pp. 522--536, 1938. [Online]. Available: \url{https://api.semanticscholar.org/CorpusID:18673721}
\BIBentrySTDinterwordspacing

\end{thebibliography}
\newpage

\appendix

\subsection{Proof of Theorem~\ref{thm:prop1}}\label{app:prop}
We give the proof for finite alphabets.
The proof extends to continuous alphabets by replacing $P_{X|U}$ with the appropriate density functions.
Let $(X,P_{X|U})$ describe the log-likelihood based rate distortion. Then, the rate distortion function is given by
\begin{align}
    R_{\ell\mkern-2mu\ell}(D)&=  \min_{W_{Y|X}:\mathbb{E}_{XY}[-\log P_{X|U}(X|Y)] \leq D}I(X;Y).
    \end{align}

\noindent\emph{Proof for $D_{\min}$ and $R(D_{\min})$:}
Given $(X,P_{X|U})$, we will prove a `realizable' lower bound on the expected distortion, which gives us $D_{\min}$. Thus, we have
\begin{align}
   \mathbb{E}_{XY}[-\log P_{X|U}(X|Y)] &= \mathbb{E}_X\Bigg[\bbE_Y\Big[\log\frac{1}{P_{X|U}(X|Y)}\Big| X\Big]\Bigg]\\
   & \geq \mathbb{E}_X\Bigg[\min_{y \in \mathcal{U}}\log\frac{1}{P_{X|U}(X|y)}\Bigg]\\
   &:=D_{\min}
\end{align}
Thus, the distortion constraint is satisfied with equality at $D= D_{\min}$ if and only if
\begin{align}\label{eq:sss}
    \bbE_Y\Big[\log\frac{1}{P_{X|U}(x|Y)}\Big| X=x\Big] = \min_{y \in \mathcal{U}}\log\frac{1}{P_{X|U}(x|y)}
\end{align}
for every $x \in \mathcal{X}$. Define
\begin{align}
T(x):= \argmax_{y'}P_{X|U}(x|y').
\end{align}
Then, we have that 
\begin{align}
\mathbb{E}[d(X,Y)] &= D_{\min}\notag\\
&\iff
\forall x \in \mathcal{X}:\ \text{supp}\big(W(\cdot\mid x)\big)\subseteq T(x).
\end{align}
Thus, the RDF for the log-likelihood based problem $(X,P_{X|U})$ at $D_{\min}$ is
\begin{align}
R_{\ell\mkern-2mu\ell}(D_{\min})
&=
\min_{\substack{
W_{Y|X}:\\
\forall x, \text{ }\text{supp}(W_{Y|X}(\cdot|x))\subseteq T(x)
}}
I(X;Y)\label{eq:rdmin}
\end{align}
For a given $P_X$ and $W_{Y|X}$, the mutual information between $X$ and $Y$ can be written as
\begin{align}
    I(X;Y)= \min_{Q_Y} \mathbb{E}_X[D_{\text{KL}}(W_{Y|X}(\cdot|X)\|Q_Y)]
\end{align}
Thus,~\eqref{eq:rdmin} can be written as
\begin{align}
R_{\ell\mkern-2mu\ell}(D_{\min})&=
\min_{\substack{
W_{Y|X}:\\
\forall x, \text{ }\text{supp}(W_{Y|X}(\cdot|x))\subseteq T(x)
}}
\min_{Q_Y \in \Delta(\mathcal{U})}\ \mathbb{E}_X[D_{\text{KL}}(W_{Y|X}\|Q_Y)]\label{eq:rdmin2}\\
&=\min_{Q_Y \in \Delta(\mathcal{U})} \min_{\substack{
W_{Y|X}:\\
\forall x, \text{ }\text{supp}(W_{Y|X}(\cdot|x))\subseteq T(x)
}}
\mathbb{E}_X[D_{\text{KL}}(W_{Y|X}\|Q_Y)]\label{eq:rdmin3}\\
&= \min_{Q_Y \in \Delta(\mathcal{U})}\mathbb{E}_X \left[
\min_{\substack{
W_{Y|X}(\cdot|X):\\
\text{supp}(W_{Y|X}(\cdot|X))\subseteq T(X)
}}
D_{\mathrm{KL}}\left(W_{Y|X}(\cdot|X)\,\|\,Q_Y\right)
\right].\label{eq:rdminfinal}
\end{align}
Solving the optimization inside the expectation in \eqref{eq:rdminfinal} is equivalent to finding the I-projection of $Q_Y$ onto the simplex $\Delta(\mathcal{Y}|X=x)$ such that $\text{supp}(W_{Y|X}(\cdot|x))\subseteq T(x)$.

We will now use the method of Lagrange multipliers and apply KKT conditions to solve the optimization problem. We obtain that the optimal forward channel $W_{Y|X}(\cdot|x)$ is
\begin{align}
    W_{Y|X}(y|x)= \frac{Q_Y(y)}{Q_Y(T(x))}
\end{align}
for $y\in T(x)$ and $0$ otherwise.
Thus,~\eqref{eq:rdminfinal}  reduces to
\begin{align}
R_{\ell\mkern-2mu\ell}(D_{\min})    &= \min_{Q_Y \in \Delta(\mathcal{U})}\mathbb{E}_X \Bigg[
-\log \sum_{y \in T(X)}Q_Y(y)
\Bigg]\label{eq:rdminfinal2}
\end{align}
It is interesting to note here that, given $x \in \mathcal{X}$, if $T(x)$ is a singleton set, the optimal decoder at $D_{\min}$ is a maximum likelihood decoder. However, if $|T(x)|>1$, it randomizes among the symbols in $\mathcal{U}$ maximizing the likelihood under $P_{X|U}$.
This completes our proof for $R_{\ell\mkern-2mu\ell}(D_{\min})$.\\

\noindent\emph{Proof for $D_{\max}$ and $R(D_{\max})$:}
$D_{\max}$ is the smallest $D$ such that $\forall$ $D\geq D_{\max}$, we have $R(D)=0$. We will use the distortion-rate function to compute $D_{\max}$, i.e.,
\begin{align}
    D(0) = \min_{\substack{
W_{Y|X}:\\
I(X;Y)=0
}} \mathbb{E}_{XY}[-\log P_{X|U}(X|Y)].
\end{align}
We will first prove a lower bound on $D(0)$ and then provide a reconstruction of $Y$, such that the lower bound is achieved.

Since, $X$ and $Y$ are independent (due to the constraint $I(X;Y)=0$), we have
\begin{align}
    D(0)&= \min_{\substack{
W_{Y|X}:\\
X \perp Y
}} \sum_{x}P_X(x)\sum_{y}Q_Y(y)\log\frac{1}{P_{X|U}(x|y)}\\
&=\min_{Q_Y} \sum_{y}Q_Y(y)\sum_{x}P_X(x)\log\frac{1}{P_{X|U}(x|y)}\\
&=\min_{Q_Y} \sum_{y}Q_Y(y)\mathbb{E}_X\Big[-\log{P_{X|U}(X|y)}\Big]\\
& \geq \min_{y \in \mathcal{U}}\mathbb{E}_X\Big[-\log{P_{X|U}(X|y)}\Big]\label{eq:finaldmax}
\end{align}
For a decoder which outputs
\begin{align*}
    Y =
        u^{*} \in \argmin_{u \in \mathcal{U}}\mathbb{E}_X[-\log{P_{X|U}(X|u)}]&\text{ w.p. 1,}
\end{align*}
we observe that the expected distortion is always $\min_{y \in \mathcal{U}}\mathbb{E}_X[-\log{P_{X|U}(X|y)}]$, matching~\eqref{eq:finaldmax} with equality. Thus, we have that
\begin{align}
    D_{\max}=\min_{y \in \mathcal{U}}\mathbb{E}_X\Big[-\log{P_{X|U}(X|y)}\Big]
\end{align}
and $R(D_{\max})=0$.\\

\noindent\emph{Proof for $D^{*}$ and $R(D^{*})$:} Let $(U,X)\sim P_{U,X}$ be consistent with $(X,P_{X|U})$. At distortion $D^{*}=H(X|U)$, we have
\begin{align}
        R_{\ell\mkern-2mu\ell}(D)&= \min_{\substack{W_{Y|X}:\\H(X|Y)+\mathbb{E}_{Y}[D_\mathrm{KL}(W_{X|Y}(\cdot|Y)||P_{X|U}(\cdot|Y))] \leq H(X|U)}}I(X;Y) \\
        &=H(X)-\max_{\substack{W_{Y|X}:\\H(X|Y)+\mathbb{E}_{Y}[D_\mathrm{KL}(W_{X|Y}(\cdot|Y)||P_{X|U}(\cdot|Y))]\leq H(X|U)}}H(X|Y)\\
        & \geq H(X) - \max_{\substack{W_{Y|X}:\\H(X|Y)\leq H(X|U)}}H(X|Y)\\
        &=H(X)-H(X|U) = I(X;U)
\end{align}
Now, we will show that there exists a test channel (achievability analysis) such that the lower bound $I(X;U)$ is achieved with equality. Consider the test channel  $W_{Y|X}$ defined by
\begin{align}
W_{Y|X}(y|x)\triangleq P_{U|X}(y|x),\qquad y\in\mathcal{U}.
\end{align}
With this choice, the induced joint distribution of $(X,Y)\sim W_{X,Y}$ coincides with that of $(X,U)\sim P_{X,U}$, hence
\begin{align}
I(X;Y)=I(X;U)
\end{align}
Moreover, by consistency we have $P_{X|Y}(\cdot|y)=P_{X|U}(\cdot|y)$ for all $y$, which implies that
\begin{align}
\mathbb{E}_{Y}\left[D_{\mathrm{KL}}\big(W_{X|Y}(\cdot|Y)\,\big\|\,P_{X|U}(\cdot|Y)\big)\right]=0
\end{align}
Therefore, under the log-likelihood distortion,
\begin{align}
\mathbb{E}_{XY}[d_{\ell\mkern-2mu\ell}(X,Y)]
&=\mathbb{E}_{XY}\left[-\log P_{X|U}(X|Y)\right] \nonumber\\
&=H(X|Y)+\mathbb{E}_{Y}\left[D_{\mathrm{KL}}\big(W_{X|Y}(\cdot|Y)\,\big\|\,P_{X|U}(\cdot|Y)\big)\right] \nonumber\\
&=H(X|Y)=H(X|U)
\end{align}
Thus the distortion constraint is satisfied with equality at $D^*=H(X|U)$, and we obtain
\begin{align}
R_{\ell\mkern-2mu\ell}(D^*) \le I(X;Y)=I(X;U).
\end{align}
Combined with the (converse) lower bound $R_{\ell\mkern-2mu\ell}(D^*)\ge I(X;U)$, this yields
\begin{align}
R_{\ell\mkern-2mu\ell}(H(X|U))=I(X;U).
\end{align}

\subsection{Proof of Theorem~\ref{thm:logloss}}\label{app:logloss}
Given $(U,X) \sim P_{U,X}$, the RDF under log-likelihood loss for $(X,P_{X|U})$, $R_{\ell\mkern-2mu\ell}(D)$ for every $D \in [D_{\min}, D_{\max}]$ is given by
\begin{align}
      R_{\ell\mkern-2mu\ell}(D)=  \min_{W_{Y|X}:\mathbb{E}_{XY}[-\log P_{X|U}(x|y)] \leq D}I(X;Y)\label{eq:rsd}
\end{align}
Considering the distortion constraint, we have
\begin{align}
    D &\geq \mathbb{E}_{XY}[-\log P_{X|U}(X|Y)]\\
    &= \mathbb{E}_Y[E_{X|Y}[-\log P_{X|U}(X|Y)]] \\
    &=\mathbb{E}_Y\Bigg[E_{X|Y}\Bigg[\log \frac{1}{W_{X|Y}(X|Y)}+ \log \frac{W_{X|Y}(X|Y)}{P_{X|U}(X|Y)}\Bigg]\Bigg]\\
    &=H(X|Y)+ \mathbb{E}_{Y}[D_{\mathrm{KL}}(W_{X|Y}(\cdot|Y)\|P_{X|U}(\cdot|Y))].
\end{align}
Therefore, $R_{\ell\mkern-2mu\ell}(D)$ in~\eqref{eq:rsd} can be written as
\begin{align}
    R_{\ell\mkern-2mu\ell}(D)= \hspace{-2mm}\min_{\substack{W_{Y|X}:\\H(X|Y)+\mathbb{E}_{Y}[D_\mathrm{KL}(W_{X|Y}(\cdot|Y)||P_{X|U}(\cdot|Y))] \leq D}}\hspace{-2mm}I(X;Y). \label{eq:comp:srd2}
\end{align}
Recall that the optimization problem for the rate distortion function under log-loss is as follows
\begin{align}
    R_{\ell}(D)&=\min_{W_{Y|X}: H(X|Y) \leq D} I(X;Y)\label{eq:comp:llrd2}\\
    &=\begin{cases}
    H(X)-D \text{ }&;\text{ if  } \text{ } 0 \leq D \leq H(X)\\
    0 \text{ }&;\text{ if  } \text{ }  D > H(X)\\
     \end{cases}
\end{align}
On comparing the optimization constraint in~\eqref{eq:comp:srd2} and~\eqref{eq:comp:llrd2}, we observe that if the constraint is satisfied for RDF under log-likelihood loss for $(X,P_{X|U})$ , it is also satisfied for the log-loss RDF for $X$ i.e., the feasible set of $W_{Y|X}$ in~\eqref{eq:comp:srd2} is a subset of the feasible set of $W_{Y|X}$ in~\eqref{eq:comp:llrd2}. Since, the log-loss rate distortion function minimizes over a larger set, it can only make the optimum smaller. Thus, we have the following lower bound
\begin{align}
    R_{\ell\mkern-2mu\ell}(D) \geq R_{\ell}(D)
\end{align}
for every $D \in [D_{\min},D_{\max}]$. Note that we have $D_{\min}>0$ and $D_{\max} \geq H(X)$. For a consistent $(U,X)$, at distortion $D^{*}= H(X|U)$, we observe that $R_{\ell}(D^{*})=I(U;X)$. From Theorem~\ref{thm:prop1}, we also know that $R_{\ell\mkern-2mu\ell}(D^{*})=I(U;X)$ yielding the equality.

\subsection{Proof of Theorem~\ref{thm:dual}}\label{app:dual}

We present our proof for discrete alphabets $\mathcal{X}$ and $\mathcal{Y}$. However, it can be extended for continuous alphabets.
We will use the Lagrange multipliers and KKT conditions to solve for the optimization problem of the rate-distortion function $R(D)$ in it's dual form. Under the assumption that conditions in Theorem~\ref{thm:dual} hold and using strong duality, we prove our result.

For $(X,d)$ the rate-distortion function in the primal form is defined as
\begin{align}
R(D)=\min_{W_{Y|X}} \quad & I(X;Y) \\
\text{s.t.} \quad 
& \mathbb{E}[d(X,Y)] \le D, \notag \\
& \sum_{y} W_{Y|X}(y \mid x) = 1, \quad \forall x \in \mathcal{X}, \notag \\
& W_{Y|X}(y \mid x) \ge 0 \notag.
\end{align}
For $\lambda > 0$ and $\nu(x) \in \mathbb{R}$, define the Lagrangian
\begin{align}
    L(W,\lambda,\nu):= &I(X;Y)+ \lambda(\mathbb{E}(d(x,y))-D)\notag \\ & \hspace{5mm}+\sum_{x}\nu(x)\left(\sum_{y}W_{Y|X}(y|x)-1\right).
\end{align}
On using the stationarity of KKT conditions, and differentiating $L(W,\lambda,\nu)$, we have
\begin{align}
    \frac{\partial L(W,\lambda,\nu)}{\partial W}&= 0\\
    P_X(x)\log\frac{W^{*}_{Y|X}(y|x)}{Q_Y(y)}+\lambda P_X(x)d(x,y)+\nu(x)&=0\label{eq:stat}
\end{align}
Let $f(x):=e^{-\nu(x)/P_X(x)} \geq 0$. From the stationarity condition in \eqref{eq:stat}, the minimizing conditional distribution
$W^{*}_{Y|X}$ must be of the form
\begin{align}
W^{*}_{Y|X}(y|x)
= \frac{Q_Y(y)e^{-\lambda d(x,y)}}{f(x)}\label{eq:forward}
\end{align}
where $f(x)$ is an $x$-dependent normalization term introduced by the
Lagrange multipliers $\nu(x)$. Enforcing the primal feasibility constraint
$\sum_y W^{*}_{Y|X}(y|x)=1$ yields
\begin{align}
f(x)=f(x,\lambda)=\sum_y Q_Y(y)e^{-\lambda d(x,y)}\label{eq:fff}
\end{align}
Therefore, the corresponding backward channel induced by Bayes' rule is 
\begin{align}
W^{*}_{X|Y}(x|y)
&= \frac{P_X(x)W^{*}_{Y|X}(y|x)}{Q_Y(y)}\\
&= \frac{P_X(x)e^{-\lambda d(x,y)}}{f(x,\lambda)} \label{eq:backward}
\end{align}
Let $\tilde{\mu}(x,\lambda):=\frac{P_X(x)}{f(x,\lambda)}$. Now, for each $x \in \mathcal{X}$, $f(x,\lambda)=\sum_y Q_Y(y)e^{-\lambda d(x,y)}$ enforces that $W^{*}_{Y|X}$ is a valid conditional distribution i.e., $\sum_{y}W^{*}_{Y|X}(y|x)=1$. Similarly, for $W^{*}_{X|Y}$ to be a valid conditional distribution it must hold that $\sum_{x}\tilde{\mu}(\lambda,x)e^{-\lambda d(x,y)}=1$.

Thus, the KKT stationarity conditions determine the exponential-tilting form of the optimal forward and backward
channel. However, it does not ensure that there exists a reconstruction marginal $Q_Y$ such that
this expression is self-consistent (i.e., $Q_Y$ equals the marginal induced by $P_X$ and
$W^{\star}_{Y|X}$) and all KKT feasibility/normalization constraints are simultaneously satisfied.
Thus, we must additionally verify the existence of a $Q_Y$ for which the optimal forward
and backward channels form a valid coupling.

Now, from the assumption in Theorem~\ref{thm:dual} we know that for every $\lambda \in \mathcal{I}$, there exists a coupling $P^{\lambda}_{X,Y}$ with $X \sim P_X$ and 
\begin{align}
    P^{\lambda}_{X|Y}(x|y) = \mu(x,\lambda)e^{-\lambda d(x,y)}.
\end{align}
Since $ P^{\lambda}_{X|Y}$ is a valid conditional distribution, we have that for all $y \in \mathcal{Y}$
\begin{align}
    \sum_{x}\mu(x,\lambda)e^{-\lambda d(x,y)} =1.
\end{align}
Moreover, $X \sim P_X$ under the coupling $P^{\lambda}_{X,Y}$. Thus, 
\begin{align}
   P_X(x) &= \sum_{y}P^{\lambda}_Y(y)P^{\lambda}_{X|Y}(x|y)\\
   &=\sum_{y}P^{\lambda}_Y(y)\mu(x,\lambda)e^{-\lambda d(x,y)}.\label{eq:aasdf}
\end{align}
Let $Q_Y := P^{\lambda}_{Y}$. By condition (ii) of the theorem, for any $\lambda \in \mathcal{I}$, there exists a valid coupling, which implies that $Q_Y$ is a valid, non-negative probability distribution. Then,~\eqref{eq:aasdf} reduces to
\begin{align}
    \sum_{y}Q_Y(y)e^{-\lambda d(x,y)} = \frac{P_X(x)}{\mu(x,\lambda)}.
\end{align}
for all $x \in \mathcal{X}$.
Comparing with~\eqref{eq:fff}, we see that under this choice of $Q_Y$
\begin{align}
    f(x,\lambda):= \frac{P_X(x)}{\tilde{\mu}(x,\lambda)}=\frac{P_X(x)}{\mu(x,\lambda)}.
\end{align}
Therefore, we have $\tilde{\mu}(x,\lambda)=\mu(x,\lambda)$; the optimal backward channel $W^{*}_{X|Y}$ and some optimal forward channel $W^{*}_{Y|X}$ satisfy,
\begin{align}
    W^{*}_{Y|X}(y|x) &= P^{\lambda}_{Y|X}(y|x) =  \frac{Q_Y(y)\mu(x,\lambda)e^{-\lambda d(x,y)}}{P_X(x)}\label{eq:accd}\\
   W^{*}_{X|Y}(x|y)&=  P^{\lambda}_{X|Y}(x|y) = \mu(x,\lambda)e^{-\lambda d(x,y)}.
\end{align}

We will now show that the $R(D)$ can be expressed as an optimization problem over a single variable.
Let the dual function be
\begin{align}
    g(\lambda,\nu) = \min_{W_{Y|X}} L(W,\lambda,\nu).
\end{align}
Since strong duality holds, the RDF $R(D)$ can be found by maximizing the dual function $g(\lambda, \nu)$ over the set of feasible parameters $\lambda$. From the above, this set is $\mathcal{I}$, where the specific coupling structure exists. Thus,
\begin{align}\label{eq:dual}
    R(D)= \max_{\lambda \in \mathcal{I}, \nu}g(\lambda,\nu).
\end{align}

We know that the optimizing $W_{Y|X}$ is of the form in~\eqref{eq:accd}. Thus, at the minimizer $W^{*}_{Y|X} $ we have
\begin{align}
    g(\lambda,\nu)&=g(\lambda)\\&= I(X;Y)+\lambda(\mathbb{E}(d(x,y))-D)\\
    &=\sum_{x,y}P_X(x)W^{*}_{Y|X}(y|x)\log\frac{W^{*}_{Y|X}(y|x)}{Q_Y(y)}\notag \\& \hspace{10mm }+\lambda\sum_{x,y}P_X(x)W^{*}_{Y|X}(y|x)d(x,y)-\lambda D\\
    &=\sum_{x,y}P_X(x)W^{*}_{Y|X}(y|x)\log\frac{\mu(x,\lambda)e^{-\lambda d(x,y)}}{P_X(x)}\notag \\& \hspace{10mm }+\lambda\sum_{x,y}P_X(x)W^{*}_{Y|X}(y|x)d(x,y)-\lambda D\\
    &=\sum_{x}P_X(x)\log\frac{\mu(x,\lambda)}{P_X(x)}\left(\sum_{y}W^{*}_{Y|X}(y|x)\right)\notag \\&\hspace{55mm}-\lambda D\\
    &=\sum_{x}P_X(x)\log\frac{\mu(x,\lambda)}{P_X(x)} -\lambda D\\
    &= H(X) + \mathbb{E}_X[\log(\mu(X,\lambda))]-\lambda D .   
\end{align}

On substituting back to the dual expression for $R(D)$ in~\eqref{eq:dual}, we have that
\begin{align}
    R(D)=\max_{\lambda \in \mathcal{I}} \Big(H(X) + \mathbb{E}_X[\log(\mu(X,\lambda))] -\lambda D \Big).
\end{align}
This completes our proof.

\subsection{Proof of Theorem~\ref{thm:perc}}\label{app:perception}
\begin{proof}
Consider the rate-distortion-perception problem $(X,d,\gamma)$. On solving the optimization in~\eqref{eq:rdp-perfect1} for $R(D,0)$ using Lagrangian and applying the KKT conditions, we have that there exist $a,b: \mathcal{X}\to\mathbb{R}$, unique up to additive constants, such that the unique optimal coupling $W^{*}_{XY}$ is given by
    \begin{align}
    \label{eq:rdp-structure}
        \log W^{*}_{X,Y}(x,y) = -\lambda d(x,y) + a(x) + b(y).
    \end{align}
    Fix such $a,b$ and an arbitrary letter $x_0\in\mathcal{X}$. Define
    \begin{align}
        a^*(x) &= a(x) + \frac{b(x_0)-a(x_0)}{2}\\
        b^*(y) &= b(y) - \frac{b(x_0)-a(x_0)}{2}.
    \end{align}
    Thus, there exist unique $a^*, b^*$ satisfying \eqref{eq:rdp-structure} such that $a^*(x_0) = b^*(x_0)$.
    By the uniqueness of $W^{*}_{X,Y}(x,y)$ and the symmetry of $d(\cdot, \cdot)$ (note that the complete positivity (CP) of $V$ implies that it is symmetric, therefore $d(\cdot,\cdot)$ is also symmetric), we have that $W^{*}_{X,Y}(x,y) = W^{*}_{X,Y}(y,x)$ for all $x,y$.
    Then,
    \begin{align*}
        -\lambda d(x,y) + a^*(x) + b^*(y) = -\lambda d(y,x) + a^*(y) + b^*(x)
    \end{align*}
    Therefore, we then have $a^*(x) - b^*(x) = a^*(y)-b^*(y)$ for all $x,y$. Setting $x=x_0$, we see that $a^* = b^*$ point-wise. It implies that there exists a diagonal matrix $\Phi$ with strictly positive entries such that 
    \begin{align}
        W^{*}_{X,Y}(x,y) = [\Phi V(\cdot,\cdot)\Phi]_{x,y}
    \end{align}
    where multiplication on the RHS is matrix multiplication and exponentiation is performed \emph{element-wise}.
    Proposition~2.5 of \cite{berman_completely_2003} states that if a matrix $A$ is completely positive and $\Phi$ is positive diagonal, then $\Phi A\Phi$ is completely positive. We conclude that the joint distribution matrix associated with $W^{*}_{X,Y}$ is completely positive.

    Given two identical and jointly distributed random variables $(X,Y)~\sim P_{X,Y}$, there exists a $Z$ satisfying $X-Z-Y$ and $P_{X|Z}=P_{Y|Z}$ iff the coupling $P_{X,Y}$ is CP.
    Thus, this guarantees the existence of a $Z$ and the log-likelihood based rate-distortion framework $(Z,P_{Z|X})$. Such a $Z$ can be constructed via the CP-factorization of the optimal coupling $W^{*}_{X,Y}$ of $(X,d,\gamma)$. This  framework $(Z,P_{Z|X})$, at the special operating point, induces a reconstruction $Y^{*}$ such that the coupling $[W_{X,Z,Y^{*}}]_{XY^{*}}$ achieves the $R(D,0)$ for $(X,d,\gamma)$. 
\end{proof}
\subsection{Distortion measures satisfying Theorem~\ref{thm:perc}}\label{app:sufficient}
\subsubsection{Squared distance distortion measure}
    Let $d(x,y)$ be the squared distortion matrix i.e., $d(x,y)=(x-y)^2$. We will show that the induced element-wise exponential matrix $V:=\exp_H(-\lambda d(\cdot,\cdot))$ for this distortion measure is completely positive. From Schoenberg's theorem \cite{schoenberg}, it is known that any negative squared distance induces a matrix that is negative definite on the orthogonal complement of $\mathbf{1}$.
    Then, the matrix formed by $-\lambda d(x,y)$ is positive definite on the orthogonal complement of $\mathbf{1}$, for any $\lambda>0$.
    Therefore, we can write
    \begin{align}
        -\lambda d(\cdot, \cdot) = A + c \mathbf{1}^{\top} \mathbf{1}
    \end{align}
    for some positive definite matrix $A$ and (possibly negative) real $c$.
    Now,
    \begin{align}
       V= \exp_H(-\lambda d(\cdot, \cdot)) = \exp_H(A)\odot\exp_H(c\mathbf{1}^{\top} \mathbf{1})
    \end{align}
    where $\odot$ denotes the element-wise (or Hadamard) product of matrices.
    By Theorem~2.30 of \cite{berman_completely_2003}, we have that $\exp_H(A)$ is completely positive.
    We have the trivial factorization $\exp_H(c\mathbf{1}^{\top} \mathbf{1}) = (\exp_H(c/2)\mathbf{1})^{\top}\exp_H(c/2)\mathbf{1}$ which has element-wise nonnegative factors, so $\exp_H(c\mathbf{1}^\top \mathbf{1})$ is completely positive.
    By Corollary~2.2 of \cite{berman_completely_2003} (Hadamard product of CP matrices is also CP), we conclude that the exponential matrix $V$ is completely positive for every $\lambda>0$.\\
    
\subsubsection{Hamming distortion measure}

Let $d(x,y)$ be the $q$-ary hamming distortion (where $|\mathcal{X}|=|\mathcal{Y}|=q$) matrix i.e., $d(x,y)=\mathbf{1}(x \neq y)$. We will show that the induced element-wise exponential matrix $V$ for this distortion measure is completely positive, for any $q\geq 2$. Our proof is based on constructing a matrix $B$ satisfying conditions in Definition~\ref{def:cp} such that $V = BB^{\top}$. Define \(a:=e^{-\lambda}\in(0,1)\) such that
\begin{align}
V_{ij}=\exp_H(-\lambda d_{ij})=
\begin{cases}
1,& i=j,\\
a,& i\neq j.
\end{cases}
\end{align}
Equivalently, we have that
\begin{align}
V=(1-a)I_q+aJ_q,
\end{align}
where $I_q$ is the $q\times q$ identity matrix and $J_q=\mathbf{1}^\top\mathbf{1}$ is the all-ones matrix.
Define the matrix \(B\in\mathbb{R}_+^{q\times(q+1)}\) by

\begin{align}
B=\begin{bmatrix}
\sqrt{a} & \sqrt{1-a} & 0 & 0 & \cdots & 0\\
\sqrt{a} & 0 & \sqrt{1-a} & 0 & \cdots & 0\\
\sqrt{a} & 0 & 0 & \sqrt{1-a} & \cdots & 0\\
\vdots & \vdots & \vdots & \vdots & \ddots & \vdots\\
\sqrt{a} & 0 & 0 & 0 & \cdots & \sqrt{1-a}
\end{bmatrix}
\end{align}

i.e., the first column of \(B\) equals \(\sqrt{a}\,\mathbf{1}\) and all the remaining \(q\) columns form \(\sqrt{1-a}\,I_q\).
Let \(b_i\) denote the \(i\)-th row of \(B\). Then for each $i \leq q$ we have that
\begin{align}
(BB^\top)_{ii}=\langle b_i,b_i\rangle = a+(1-a)=1=V_{ii}.
\end{align}
where $\langle\cdot,\cdot\rangle$ denotes the inner product operator. For \(i\neq j\) the only common non-zero coordinate of \(b_i\) and \(b_j\) is the first element, hence
\begin{align}
(BB^\top)_{ij}=\langle b_i,b_j\rangle = (\sqrt{a})(\sqrt{a}) = a = V_{ij}.
\end{align}
Therefore, we have that \(BB^\top=V\) where every element of $B$ is non-negative. Thus, \(V\) is completely positive (CP) for every $\lambda>0$.

\end{document}